\documentclass[
	submission,
copyright,creativecommons,UKenglish]{eptcs}



\newcommand{\bw}{1}

\usepackage{algorithm}
\usepackage{algorithmicx}



\usepackage{amsmath,amsthm,amssymb}
\usepackage{mathtools}
\usepackage{xspace,enumerate,color,epsfig}
\usepackage{microtype}
\usepackage{graphicx}
\graphicspath{{.}{./figures/}}
\usepackage{marvosym}
\usepackage{bm}
\usepackage{tabu}
\tabulinesep=2mm

\usepackage{tikzit}

\if\bw1

\tikzstyle{Black dot}=[dot, fill={gray!40!white}, draw=black, shape=circle, tikzit fill={rgb,255: red,128; green,128; blue,128}, tikzit draw=black, tikzit shape=circle]
\tikzstyle{White dot}=[dot, fill=white, draw=black, shape=circle, tikzit fill=white, tikzit draw=black]
\tikzstyle{H}=[fill=white, draw, inner sep=0.6mm, minimum height=1.5mm, minimum width=1.5mm, tikzit shape=rectangle]
\tikzstyle{H box}=[H, tikzit shape=rectangle]
\tikzstyle{hadamard}=[H, tikzit shape=rectangle]
\tikzstyle{small hadamard}=[fill=white, draw, inner sep=0.6mm, minimum height=1.5mm, minimum width=1.5mm]
\tikzstyle{Function box}=[fill=white, draw=black, shape=rectangle, tikzit draw=black, tikzit fill=white, minimum width=1cm, minimum height=1cm]
\tikzstyle{dot}=[inner sep=0.3mm, minimum width=2mm, minimum height=2mm, draw, shape=circle, font={\footnotesize}]
\tikzstyle{white dot}=[dot, fill=white]
\tikzstyle{gray dot}=[dot, fill={rgb,255: red,128; green,128; blue,128}, text depth=-0.2mm]
\tikzstyle{cnot ctrl}=[fill=black, draw=black, shape=circle, inner sep=0pt, minimum width=1.2mm, tikzit category=circuit]
\tikzstyle{cnot targ}=[fill=white, draw=white, shape=circle, tikzit category=circuit, label={center:$\oplus$}, inner sep=0pt, minimum width=2.1mm, tikzit fill={rgb,255: red,102; green,204; blue,255}, tikzit draw=black]
\tikzstyle{gray phase dot}=[gray dot, tikzit draw={rgb,255: red,128; green,128; blue,128}]
\tikzstyle{white phase dot}=[minimum size=5mm, font={\footnotesize\boldmath}, shape=rectangle, rounded corners=2mm, inner sep=0.2mm, outer sep=-2mm, scale=0.8, tikzit shape=circle, draw=black, fill=white, tikzit draw=blue]
\tikzstyle{ehadamard}=[fill=white, draw, double, double distance=1pt, inner sep=0.9mm, minimum height=1.5mm, minimum width=1.5mm, tikzit shape=rectangle]

\tikzstyle{point}=[fill=white, draw=black, shape=regular polygon, regular polygon sides=3, scale=0.5, inner sep=0pt,minimum size=8mm]
\tikzstyle{left point}=[point, regular polygon rotate=-30]
\tikzstyle{right point}=[point, regular polygon rotate=30]
\tikzstyle{up point}=[point]
\tikzstyle{down point}=[point, regular polygon rotate=180]

\tikzstyle{braceedge}=[-,decorate, decoration={brace,amplitude=1mm,raise=-1mm}]
\tikzstyle{special edge}=[-, draw=blue, tikzit draw=blue]
\tikzstyle{diredge}=[->]

\input{zh.tikzdefs}
\fi

\if\bw0

\input{zh.tikzdefs}
\fi

\usepackage{stmaryrd}
\usepackage{docmute}
\usepackage{keycommand}

\usepackage{hyperref}
\usepackage[all]{hypcap} 







\theoremstyle{definition}
\newtheorem{theorem}{Theorem}[section]

\newtheorem{lemma}[theorem]{Lemma}
\newtheorem{proposition}[theorem]{Proposition}

\newtheorem{definition}[theorem]{Definition}

\newtheorem{example}[theorem]{Example}

\newtheorem{example*}[theorem]{Example*}
\newtheorem{examples*}[theorem]{Examples*}
\newtheorem{remark}[theorem]{Remark}
\newtheorem{remark*}[theorem]{Remark*}


\hyphenation{line-break line-breaks docu-ment triangle cambridge amsthdoc
  cambridgemods baseline-skip author authors cambridgestyle en-vir-on-ment polar}

\usepackage[color]{changebar}

\usepackage{color}
\def\bR{\begin{color}{red}} 
\def\bB{\begin{color}{blue}}
\def\bM{\begin{color}{magenta}}
\def\bC{\begin{color}{cyan}}
\def\bW{\begin{color}{white}}
\def\bBl{\begin{color}{black}} 
\def\bG{\begin{color}{green}}
\def\bY{\begin{color}{yellow}}
\def\e{\end{color}\xspace}
\newcommand{\bit}{\begin{itemize}}
\newcommand{\eit}{\end{itemize}\par\noindent}
\newcommand{\ben}{\begin{enumerate}}
\newcommand{\een}{\end{enumerate}\par\noindent}
\newcommand{\beq}{\begin{equation}}
\newcommand{\eeq}{\end{equation}\par\noindent}
\newcommand{\beqa}{\begin{eqnarray*}}
\newcommand{\eeqa}{\end{eqnarray*}\par\noindent}
\newcommand{\beqn}{\begin{eqnarray}}
\newcommand{\eeqn}{\end{eqnarray}\par\noindent}

\usepackage{proof}
\usepackage{dsfont}

\newcommand{\bra}[1]{\ensuremath{\left\langle #1 \right|}}
\newcommand{\ket}[1]{\ensuremath{\left|  #1 \right\rangle}}

\newcommand{\ketbra}[2]{\ensuremath{\ket{#1}\!\bra{#2}}}
\newcommand{\C}{\mathbb{C}}
\newcommand{\R}{\mathbb{R}}
\newcommand{\N}{\mathbb{N}}

\newcommand{\B}{\mathbb{B}}

\renewcommand{\vec}[1]{\mathbf{#1}}

\newcommand{\w}{\vec{w}}
\newcommand{\x}{\vec{x}}


\newcommand{\ie}{i.e.\xspace}



\renewcommand{\int}[1]{\left[#1\right]}

\newcommand{\lift}[1]{\overline{#1}}

\newcommand{\reduces}{\longrightarrow}


\newcommand{\pz}[1]{[#1]_{\textrm{ps}\to\textrm{zh}}}
\newcommand{\zp}[1]{[#1]_{\textrm{zh}\to\textrm{ps}}}



\title{Hypergraph Simplification: \\Linking the Path-sum Approach to the ZH-calculus}

\author{Louis Lemonnier
\institute{ENS Paris-Saclay, Universit\'e Paris-Saclay}
\email{louis.lemonnier@ens-paris-saclay.fr} \and
    John van de Wetering
\institute{Radboud Universiteit Nijmegen}
\email{wetering@cs.ru.nl} \and
	Aleks Kissinger
\institute{Oxford University}
\email{aleks.kissinger@cs.ox.ac.uk}
}

\begin{document}

\maketitle

\begin{abstract}
The ZH-calculus is a complete graphical calculus for linear maps between qubits that admits a straightforward encoding of hypergraph states and circuits arising from the Toffoli+Hadamard gate set. 
In this paper, we establish a correspondence between the ZH-calculus and the path-sum formalism, a technique recently introduced by Amy to verify quantum circuits. In particular, we find a bijection between certain canonical forms of ZH-diagrams and path-sum expressions. We then introduce and prove several new simplification rules for the ZH-calculus, which are in direct correspondence to the simplification rules of the path-sum formalism. The relatively opaque path-sum rules are shown to arise naturally from two powerful families of rewrite rules in the ZH-calculus. The first is the extension of the familiar graph-theoretic simplifications based on local complementation and pivoting to their hypergraph-theoretic analogues: hyper-local complementation and hyper-pivoting. The second is the graphical Fourier transform introduced by Kuijpers et al., which enables effective simplification of ZH-diagrams encoding multi-linear phase polynomials with arbitrary real coefficients.
\end{abstract}

\section{Introduction}

The very nature of quantum computation makes it hard to verify classically that a given quantum circuit implements the desired computation without incurring exponential space or time costs.
It is however still possible to develop smart heuristics that can verify that quantum circuits indeed implement the right unitary.
One such heuristic is the \emph{path-sum} approach~\cite{Amy18}.
It represents each quantum gate in the circuit by the action it has on the computational basis states, given by a Boolean function determining the output basis state and a semi-Boolean function giving relative phases of outputs, each of which can depend on inputs as well as auxiliary Boolean variables, or `paths', that are summed over. 
Amy developed a set of simplification rules for these path-sums that were powerful enough to completely simplify a set of benchmark quantum circuits that implemented classical reversible functions to their classical specification.
Each of these rewrite rules eliminates a variable from the path-sum, but beyond that, their interpretation is quite opaque.

In this paper we will see that path-sum expressions and the rewrite rules from~\cite{Amy18} can be represented in a natural way using the \emph{ZH-calculus}.
The ZH-calculus is a graphical language recently introduced by Backens and Kissinger~\cite{BK18} that can straightforwardly represent computations involving Hadamard and Toffoli gates, and generalisations thereof. It comes with a set of graphical rewrite rules that are \emph{complete}, meaning that any two diagrams representing equal linear maps can be graphically transformed into one another.

There are two key ingredients in our translation of the path-sum rewrite rules into the ZH-calculus. The first is based on the realisation that ZH-diagrams can easily represent \emph{hypergraph states}~\cite{rossi2013quantum,qu2013encoding}.
\emph{Graph states} are a type of stabiliser state widely used in a variety of quantum protocols as well as the one-way model of measurement-based quantum computation~\cite{MBQC2}. Interestingly, the graph operations of \emph{local complementation} and \emph{pivoting} can be performed on the underlying graph of a graph state by applying local Cliffords~\cite{NestMBQC}.
We will show in this paper that the graph-theoretic simplifications of ZX-diagrams from~\cite{DKPW19} based on local complementation and pivoting extend naturally to hypergraph-theoretic simplifications of ZH-diagrams, which we call \textit{hyper-local complementation} and \textit{hyper-pivoting}.

The second ingredient is producing a ZH-calculus analogue to the operation of \textit{lifting} of a Boolean polynomial $Q$ to a related polynomial $\overline{Q}$ to enable substitution into a semi-boolean function. This operation plays an important role in the path-sum reductions, and it turns out to correspond to the \textit{graphical Fourier transform} introduced by Kuijpers and two of the authors in~\cite{KWK19}.

By introducing a new connection between path-sums and hypergraph states via the ZH-calculus we provide not only a new perspective on verification of quantum circuits, but also new techniques that can be applied to measurement-based quantum computing models based on hypergraph states~\cite{gachechiladze2019changing,takeuchi2019quantum,gachechiladze2019quantum}. Computations could be analysed and verified using either the ZH-calculus or path-sums, similar to how the ZX-calculus can be used to analyse the one-way model of MBQC~\cite{backens2020extraction,duncan2010rewriting}.




\noindent \textbf{Related work.} This work is an extension of a technical report by one of the authors in 2019~\cite{lemonnierreport}. Since then, parallel work of Vilmart has drawn a similar correspondence between the path-sum approach and the ZH-calculus~\cite{vilmart2020structure}. That work differs from ours in two ways. First, its correspondence is semantical: i.e. it defines categories of ZH-diagrams and sum-over-paths expressions, modulo the appropriate laws and gives functors in either direction. Ours is syntactic: we establish a direct bijection between certain universal families of ZH-diagrams and path-sum expressions. 
Second, at the level of rewriting, Ref.~\cite{vilmart2020structure} focuses on the Clifford fragments of each of the two languages and proves completeness, whereas we aim to capture the full power the two languages in reasoning beyond Clifford computations, but like~\cite{Amy18}, our goal is to develop useful heuristics for diagram simplification rather than a complete procedure for deciding equality (which is a QMA-hard problem~\cite{bookatz2012qmacomplete}).

Much as the ZX-calculus is closely connected to the theory of graph states and the one-way model, the ZH-calculus is closely connected to the emerging theory of hypergraph states. In that context, a notion of local complementation for hypergraph states has been introduced by Gachechildaze et al.~\cite{gachechiladze2017graphical} which is closely related to the simplification we introduce in Section~\ref{sec:hlc}. In addition, a restricted version of the graphical Fourier transform was described for \emph{weighted} hypergraph states in \cite{tsimakuridze2017graph}.


\section{ZH-calculus}
\label{s:ZH-dfn}

In this section we will recall the ZH-calculus, together with (annotated) !-box notation and the Fourier transform rule of \cite{KWK19}.

The ZH-calculus is a diagrammatic language introduced by Backens and Kissinger~\cite{BK18} that represents linear maps as \emph{ZH-diagrams}. These are string diagrams based on two generators: Z-spiders, depicted as white dots; and H-boxes, depicted as white boxes with a complex parameter $a$:
$$
\tikzfig{Z-spider} := \ket{0\ldots0}\bra{0\ldots0} + \ket{1\ldots1}\bra{1\ldots1} \qquad\qquad
\tikzfig{H-spider} := \sum a^{i_1\ldots i_m j_1\ldots j_n} \ket{j_1\ldots j_n}\bra{i_1\ldots i_m}
$$
Here in the right-hand equation the sum runs over all $i_1,\ldots, i_m, j_1,\ldots, j_n\in\{0,1\}$. Hence, an H-box represents a matrix with $a$ as its $\ket{1\ldots1}\bra{1\ldots1}$ entry, and ones everywhere else. By convention, we omit the parameter $a$ when $a=-1$, and hence an unlabeled H-box with 1 input and 1 output is the conventional Hadamard gate (up to normalisation).
More complex diagrams are constructed by composing these generators either by stacking them or by joining the outputs of the first with the inputs of the second, which correspond respectively to the tensor product and regular composition of linear maps.

Our calculations will greatly benefit from the usage of !-box -- pronounced as ``bang box''
-- notation~\cite{KMS12}. A !-box, drawn as a blue square around a piece of a ZH-diagram, represents a part of the diagram that may be replicated an 
arbitrary number of times, and hence allows one to express a whole family of diagrams at once:

$$
\tikzfig{bang-box-example} \quad \longleftrightarrow \quad
 \left\{
 \ \ \tikzfig{bang-box-example0}\ \ ,\quad
 \ \ \tikzfig{bang-box-example1}\ \ ,\quad
 \ \ \tikzfig{bang-box-example2}\ \ ,\quad
 \ \ \tikzfig{bang-box-example3}\ \ ,\quad
 \ \ \ldots\ \  \right\}
$$

When used in equations, corresponding !-boxes on either side of the equation should be understood to be replicated 
an equal number of times.

As in \cite{BK18} we will also use \emph{annotated} !-boxes that are labelled by a set or a natural number to denote the number of copies of the diagram. For example, letting $\B = \lbrace 0,1 \rbrace$ denote the set of Booleans we have:
$$\tikzfig{figures/annotatedIntro} \qquad\quad\text{and}\qquad\quad \tikzfig{figures/overlapIntro}$$
Note that in the right-hand diagram we had overlapping !-boxes resulting in a fully-connected bipartite graph of connectivity.
Also following \cite{BK18} we use some derived generators: the X-spider and the NOT gate.

\begin{minipage}{.45\linewidth}
\beq\tag{XS}\label{eq:grey-spider}
 \tikzfig{X-spider-dfn-bb}
\eeq
\end{minipage}
\begin{minipage}{.45\linewidth}
\beq\tag{N}\label{eq:X-dfn}
 \tikzfig{negate-dfn}
\eeq
\end{minipage}

The power of the ZH-calculus comes from the set of graphical rewrite rules associated to it. First of all, ZH-diagrams are considered equal when they can be topologically deformed into one another, as long as the order of in- and outputs is preserved~\cite{CD09,PQP}. Second, there are a set of rewrite rules that can be applied to parts of a diagram. We present these standard rules in Figure~\ref{fig:ZH-rules}. 

\begin{figure}[!htb]
 \centering
 \scalebox{0.9}{
 \begin{tabular}{|cccccccc|}
 \hline 
 &&&&&&&\\
 (ZS1) & \tikzfig{Z-spider-rule-bb} & \qquad \quad & (ZS2) & \tikzfig{Z-special} & \qquad & (BA1) & \tikzfig{ZX-bialgebra-bb} \\ &&&&&&&\\
 (HS1) & \tikzfig{H-spider-rule-bb} & & (HS2) & \tikzfig{H-identity} & & (BA2) & \tikzfig{ZH-bialgebra-bb} \\ &&&&&&&\\
 (M) & \tikzfig{multiply-rule-bb} & & (U) & \tikzfig{unit-bangboxed} & & (I) & \tikzfig{intro-rule-bangboxed} \\ &&&&&&&\\
 (A) & \tikzfig{average-rule} & & (O) & \tikzfig{ortho-rule} & & &\\ &&&&&&&\\
 \hline
 \end{tabular}}
 \caption{The rules of the ZH-calculus.
 Throughout, $a,b$ are arbitrary complex numbers. These are the !-boxed versions of the rules as presented in \cite{BK18}.
 \label{fig:ZH-rules}}
\end{figure}

These rules are \emph{complete} meaning that if two ZH-diagrams represent the same linear map, then those diagrams can be transformed into one another using some application of these rules~\cite{BK18}.





Some of our results will require the \emph{Fourier transform} of a ZH-diagram as constructed in \cite{KWK19}. Before we introduce this Fourier transform, we recall some of the notation of \cite{KWK19}. First, there are the \emph{exponentiated} H-boxes and the associated phase spiders, that allow us to make the connection between ZH-diagrams and path-sums more direct:
\begin{equation}\label{eq:phase-spider-dfn}
    \tikzfig{eH-spider}  \qquad\qquad \tikzfig{X-spider-dfn-phase} \qquad\qquad \tikzfig{Z-spider-dfn-phase}
\end{equation}
Second, there are the \emph{disconnect boxes} from \cite{KWK19} that combine nicely with annotated !-boxes:
\begin{equation}\label{eq:disconnect-box}
\tikzfig{figures/bNotation}
\end{equation}
Let $[n] = \lbrace 1, \dots, n\rbrace$ denote the $n$ element set and let $\lvert \vec{b} \rvert$ denote the \emph{weight} of a bitstring $\vec b\in\mathbb{B}^n$, i.e.~the number of $1$'s in $\vec b$. Denote by $\mathbb{B}^n_*$ the set of all non-zero bitstrings, i.e.~$\mathbb{B}^n\setminus (0,\dots,0)$.
\begin{proposition}\label{prop:fourier-transform}
\cite{KWK19} The following Fourier-transform rule holds.
$$\tikzfig{fourierBang-exp}$$
\end{proposition}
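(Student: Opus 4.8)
The plan is to peel the !-box equation apart into its fixed-arity instances and prove each one by a short computation whose combinatorial core is the Fourier expansion of the $\mathrm{AND}$ function over $\mathbb{Z}_2^n$. By the semantics of (annotated) !-boxes, the displayed equation holds iff it holds after the !-box governing the arity is instantiated to every fixed $n\in\mathbb{N}$; so I would fix $n$ and argue about the two resulting concrete ZH-diagrams. For fixed $n$, every generator occurring on either side — the exponentiated H-box and phase spiders of \eqref{eq:phase-spider-dfn}, and the disconnect boxes of \eqref{eq:disconnect-box} — is diagonal in the computational basis, so both sides denote diagonal operators on $(\mathbb{C}^2)^{\otimes n}$ (up to an overall scalar, which I would carry along separately), and it suffices to compare their diagonal entries at each $\ket{\vec x}$, $\vec x\in\mathbb{B}^n$.

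Reading off the entries, the left-hand side contributes $e^{i\pi\alpha\,x_1\cdots x_n}$. On the right, the !-box together with the disconnect-box discipline of \eqref{eq:disconnect-box} (cf.\ the overlapping-!-box example above) unfolds the diagram into a composite of one XOR phase gadget per non-zero bitstring $\vec b\in\mathbb{B}^n_*$, the $\vec b$-th gadget touching exactly the wires in the support of $\vec b$ and carrying a phase $\gamma_{\vec b}$; its contribution to the entry at $\vec x$ is $\exp\!\big(i\pi\,\gamma_{\vec b}\,(\bigoplus_{k:\,b_k=1}x_k)\big)$, so the right-hand entry is $\exp\!\big(i\pi\sum_{\vec b\in\mathbb{B}^n_*}\gamma_{\vec b}\,(\bigoplus_{k:\,b_k=1}x_k)\big)$.

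The two entries agree for all $\vec x$ exactly when $\alpha\,x_1\cdots x_n=\sum_{\vec b\in\mathbb{B}^n_*}\gamma_{\vec b}\,(\bigoplus_{k:\,b_k=1}x_k)$, which is the discrete Fourier expansion of $\vec x\mapsto\alpha\,x_1\cdots x_n$: substituting $x_k=\tfrac{1}{2}\big(1-(-1)^{x_k}\big)$, expanding the product $\prod_{k}(1-(-1)^{x_k})$, and using $\sum_{S\subseteq[n]}(-1)^{|S|}=0$ for $n\geq1$ gives $x_1\cdots x_n=\tfrac{1}{2^{n-1}}\sum_{\emptyset\neq S\subseteq[n]}(-1)^{|S|+1}\big[\bigoplus_{k\in S}x_k\big]$, so the phases $\gamma_{\vec b}=(-1)^{|\vec b|+1}\alpha/2^{n-1}$ written in the diagram are precisely the required Fourier coefficients, and the match is an exact equality of reals. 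It then remains to reconcile the two overall scalars, which is routine bookkeeping.

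\textbf{Main obstacle.} The delicate point is the unfolding of the right-hand side: checking that the !-box/disconnect-box configuration really produces, for every $n$, exactly one phase gadget per element of $\mathbb{B}^n_*$ on the correct support and with the correct phase — in particular that the disconnect boxes generate the alternating sign $(-1)^{|\vec b|+1}$ as wires are switched on and off. An approach that sidesteps this is an induction on $n$ carried out entirely inside the ZH-calculus: the case $n=1$ is immediate, and the step decomposes the $(n+1)$-ary exponentiated H-box into the $n$-ary one ``controlled'' on the extra wire, invokes the induction hypothesis, and re-collects gadgets using the spider and phase-gadget rewrite rules; either way, keeping the $(-1)^{|\vec b|}$ signs coherent across the !-box expansion is the part that needs care.
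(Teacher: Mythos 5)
The paper does not actually prove this proposition: it is imported verbatim from \cite{KWK19} (note the citation in the statement), so there is no in-paper argument to compare against. Judged on its own terms, your proposal is essentially sound and its combinatorial core is exactly the right one: the identity
$x_1\cdots x_n=\tfrac{1}{2^{n-1}}\sum_{\emptyset\neq S\subseteq[n]}(-1)^{|S|+1}\big(\bigoplus_{k\in S}x_k\big)$
is precisely the Fourier relation between the monomial and parity bases that the rule encodes, and your derivation of it (via $2x_k=1-(-1)^{x_k}$ and $\sum_S(-1)^{|S|}=0$) is correct. Note that this is the inverse of the expansion the paper uses elsewhere, $\lift{Q}=\sum_{\vec b\in\mathbb{B}^n_*}(-2)^{|\vec b|-1}\prod_i (m_i')^{b_i}$ in Eq.~\eqref{eq:lifting-Q} and $\chi(\lambda,\vec b)=\lambda^{(-2)^{|\vec b|-1}}$ in Lemma~\ref{lem:fhp-helper}; the two are mutually inverse triangular changes of basis, so whichever way the figure is oriented your computation covers it, but you should check the orientation against the diagram before fixing which side carries the $1/2^{n-1}$. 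The one methodological difference worth flagging is that yours is a purely semantic verification (instantiate the !-box at each $n$, compare diagonal entries), whereas \cite{KWK19} establishes the rule as a derived rule of the calculus; since the ZH-calculus is complete, semantic equality does imply derivability, but your argument silently leans on that completeness theorem, and it also defers two genuinely fiddly points — the global scalar (each of the $2^n-1$ phase gadgets contributes a normalisation factor, and these must be tracked, not just "reconciled") and the claim that the annotated !-box plus disconnect boxes of \eqref{eq:disconnect-box} unfold to exactly one gadget per $\vec b\in\mathbb{B}^n_*$ with the sign $(-1)^{|\vec b|+1}$. You correctly identify the latter as the delicate step; as written it is asserted rather than checked, and either the induction on $n$ you sketch or a careful appeal to the !-box semantics would be needed to close it.
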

For example, in the $n=2$ and $n=3$ cases we have:
$$\tikzfig{fourier-example-2} \qquad\qquad \tikzfig{fourier-example-3}$$

\begin{remark}
	The reason we call this rule a Fourier transform is because it is closely related to the Fourier transform of semi-Boolean functions. See \cite{KWK19} for the details. 
	We presented the rule using exponentiated H-boxes. A more general version using regular H-boxes (as long as the label is non-zero) also holds. 
\end{remark}

Before we continue onto our introduction of path-sums, it will be useful to introduce a new class of ZH-diagrams that generalises the definition of graph-like ZX-diagrams from \cite{DKPW19}.

\begin{definition}
	We say a ZH-diagram is \emph{hypergraph-like} when 
	\begin{itemize}
		\item all spiders are Z-spiders,
		\item every in- and output wire is connected only to a spider (so no connections directly to H-boxes),
		\item the only wires are between H-boxes and spiders,
		\item there is at most one wire between any given H-box and spider (so no parallel edges),
		\item and there are no H-boxes connected to exactly the same set of spiders.
	\end{itemize}
\end{definition}

We call these diagrams hypergraph-like, because most of their structure is captured by the underlying hypergraph that has as vertices the spiders and as hyperedges the H-boxes.
\begin{definition}
	A \emph{hypergraph} $G=(V,E)$ consists of a set of \emph{vertices} $V$ and a set of \emph{hyperedges} $E$. Each hyperedge $e\in E$ is a non-empty set of vertices $e\subseteq V$. We call a hyperedge \emph{simple} when it contains exactly two vertices. A \emph{simple graph} is a hypergraph where every hyperedge is simple.
\end{definition}
The underlying hypergraph of a hypergraph-like ZH-diagram is a simple graph iff all H-boxes have arity 2. Such diagrams are \emph{graph-like} as defined in \cite{DKPW19}.
Every ZH-diagram can be reduced to a hypergraph-like ZH-diagram representing the same linear map. Before we prove this, we need a lemma.

\begin{lemma}\label{lem:sv}
Multiple parallel edges between an H-box and a Z-spider can be reduced to a single edge.
        \beq\label{eq:sv}
        \tikzfig{reduce-parallel-wires}
        \eeq
\end{lemma}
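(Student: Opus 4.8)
The plan is to reduce the statement to a single local rewrite and then iterate. Write $k$ for the number of parallel edges between the H-box and the Z-spider; I would argue by induction on $k$. The case $k=1$ is vacuous, so it suffices to show that two parallel edges can be merged into one, after which repeated application removes all but one of the $k$ edges.

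For the inductive step, I would first apply spider fusion \SpiderRule to split the Z-spider into two: one Z-spider retaining all of the original boundary and other edges together with a single fresh edge, and a $3$-ary Z-spider --- the ``copy'' map $\ket{00}\bra{0}+\ket{11}\bra{1}$ --- whose remaining two legs are exactly the two edges running to the H-box. After also using \SpiderRule to pull any further legs of the H-box out of the way, the claim reduces to the purely local identity: copying a wire and feeding the two copies into two legs of an H-box equals feeding the single wire into one leg of that H-box (with one leg fewer). Semantically this is just the observation that an $n$-ary H-box with label $a$ acts as $a$ on the all-ones input and as $1$ on every other input, so identifying two of its input legs --- which forces them to carry equal bits, and $b\wedge b = b$ --- leaves the linear map unchanged; expanding both sides in the standard interpretation confirms they denote the same map, with no scalar correction.

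To discharge this core identity inside the calculus I would unfold the H-box with \HFuseRule so that the two legs in question meet a small H-box, and then use the H-box bialgebra rule \HCompRule together with \SpiderRule and \IDRule to collapse the two copied legs to one. The main obstacle is finding the shortest such sequence: the copy map here interacts with an H-box rather than with a second spider, so the ordinary bialgebra rule \StrongCompRule does not apply verbatim, and some massaging of the H-box is needed first. A clean alternative, if the diagrammatic route turns out to be unwieldy, is to obtain the identity directly from completeness of the ZH-calculus~\cite{BK18}, using the semantic computation above. In either case, folding the induction on $k$ back in yields the lemma.
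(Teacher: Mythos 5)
Your proof is correct, but it takes a genuinely different route from the paper's. The paper disposes of this lemma in one line by citing the rule (DC) from Lemma 3.4 of \cite{KWK19}, which is essentially this statement already (a doubled wire between an H-box and a Z-spider collapses to a single wire), so the only remaining content there is the routine iteration from two parallel edges to $k$. You instead give a self-contained argument: unfuse the Z-spider with \SpiderRule to isolate a copy map feeding two legs of the H-box, verify semantically that merging those two legs preserves the interpretation (the exponent of the H-box entry is the product of the incident bits, and $b\cdot b=b$, so no scalar correction arises), and then either derive the resulting local identity diagrammatically or appeal to completeness of the ZH-calculus. The semantic computation is airtight and the appeal to completeness is legitimate, so the argument goes through; the concrete derivation you sketch via \HFuseRule and \HCompRule is the part you leave least developed, but you correctly flag it as dispensable. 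One small inaccuracy: \SpiderRule cannot be used to ``pull the other legs of the H-box out of the way,'' since it only unfuses Z-spiders, not H-boxes; fortunately that step is unnecessary, because your core identity holds verbatim for an H-box carrying arbitrarily many additional legs. What your approach buys is independence from \cite{KWK19} and an explicit reason the identity is scalar-free; what the paper's approach buys is brevity, at the cost of outsourcing the actual derivation to an external lemma.
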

\begin{proof}
	Follows easily from Lemma~5.1 in~\cite{backens2021completeness}.  
\end{proof}

\begin{lemma}\label{lem:ZH-to-hypergraph}
	Every ZH-diagram can be efficiently transformed into a hypergraph-like ZH-diagram.
\end{lemma}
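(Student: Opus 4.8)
The plan is to mirror the proof that every ZX-diagram can be put into graph-like form~\cite{DKPW19}, with the Hadamard-edge manipulations there replaced by manipulations of general H-boxes and, crucially, by Lemma~\ref{lem:sv}. We establish the five defining conditions of a hypergraph-like diagram by a sequence of local rewrites, applied in an order chosen so that fixing one condition never re-breaks an earlier one.

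First I would expand any derived generators: every X-spider and NOT gate is replaced by its definition \eqref{eq:grey-spider}, \eqref{eq:X-dfn}, so that the diagram consists only of Z-spiders and H-boxes and the condition that all spiders are Z-spiders holds. Next, repeatedly apply the spider-fusion rule \SpiderRule to merge any two Z-spiders joined by a wire, and to remove Z-spider self-loops; this phase terminates since each step strictly decreases the number of spiders. Now, to enforce that every boundary wire and every internal wire touches a spider, insert a two-legged Z-spider --- which is the identity by \IDRule --- on every wire running directly between two H-boxes, on every wire from a boundary into an H-box, and on every H-box self-loop. Each insertion happens once per offending wire, so this phase is finite, and since each new spider is adjacent only to H-boxes or a boundary it creates no spider--spider wire, preserving the earlier conditions.

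It remains to control multiplicities of H-box--spider incidences. To ensure at most one wire runs between any H-box and any spider, apply Lemma~\ref{lem:sv} to collapse every bundle of parallel wires between an H-box and a Z-spider to a single wire; this also cleans up the double wires that appear when the insertion step acted on a multi-wire connection or an H-box self-loop. Finally, to eliminate two H-boxes $H_a, H_b$ attached to the same set $S=\{v_1,\dots,v_k\}$ of spiders, unfuse a three-legged Z-spider $w_i$ at each $v_i$ (with legs to $v_i$, $H_a$ and $H_b$) so that both H-boxes meet the same fresh spiders $w_1,\dots,w_k$; the resulting configuration --- two H-boxes sharing all their legs through one Z-spider per leg --- equals a single H-box with label $ab$ on $\{w_1,\dots,w_k\}$, which follows from the H-box semantics and hence, by completeness of the ZH-calculus, is derivable by a bounded-size rewrite; re-fusing each $w_i$ into $v_i$ then yields a single H-box on $S$. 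Each merge strictly decreases the number of H-boxes and creates no parallel edges, so iterating terminates; zero-legged H-boxes are disconnected scalars and may be collected or ignored.

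For the efficiency claim, each of the expansion, fusion, insertion, parallel-edge and merge phases performs at most linearly many local rewrites in the size of the diagram, and only the insertion phase enlarges it, by a constant factor; so the whole procedure runs in polynomial time. The main obstacle is not any single rewrite --- all are immediate from \SpiderRule, \IDRule, Lemma~\ref{lem:sv} and the H-box semantics --- but the invariant bookkeeping that justifies the chosen ordering: one must check that inserting identity spiders never creates a new spider--spider adjacency, that Lemma~\ref{lem:sv} does not undo the boundary- and internal-wire conditions, and that H-box merging only ever decreases (never increases) the number of parallel edges and coincident neighbourhoods, together with a clean treatment of self-loops on both kinds of node.
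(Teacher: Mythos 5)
Your proof follows essentially the same route as the paper's: expand the derived generators via \eqref{eq:grey-spider} and \eqref{eq:X-dfn}, fuse spiders with (ZS1), insert identity spiders with (ZS2) to separate H-boxes from each other and from the boundary, collapse parallel edges with Lemma~\ref{lem:sv}, and merge H-boxes with coincident neighbourhoods. The only difference is your last step, where you appeal to semantics and completeness to fuse two H-boxes sharing all their legs; this is exactly rule (M) of Figure~\ref{fig:ZH-rules}, which the paper cites directly, so your detour through unfused three-legged spiders and completeness is sound but unnecessary.
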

\begin{proof}
	First transform all grey spiders and NOT gates into white spiders using the definitions (XS) and (N). Remove all double Hadamard gates this introduces with (HS2) --- the equation label refers to Figure~\ref{fig:ZH-rules}. Fuse all the spiders by applying (ZS1) repeatedly. Disconnect in- and outputs from H-boxes by introducing identities with (ZS2), and similarly introduce identities between H-boxes that are connected. Remove parallel edges between H-boxes and spiders with Lemma~\ref{lem:sv}. Finally, fuse H-boxes that have the same set of neighbours using (M).
\end{proof}

\begin{remark}
	The \emph{weighted} hypergraphs of \cite{tsimakuridze2017graph} associate phases to each hyperedge. The resulting weighted hypergraph states precisely correspond to hypergraph-like ZH-diagrams where every spider has exactly one output wire and there are no inputs. The ZH normal form diagrams of \cite{BK18} similarly have each Z-spider connected to a unique output, but there H-boxes are also allowed to be connected to white spiders via a NOT gate.
\end{remark}

\section{Path-sums and pure path-sums}
\label{s:pps}

Path-sums give a compact way of representing the action of a linear map $A$ on computational basis states in terms of two polynomial functions $f$ and $\phi$:
\begin{equation}\label{eq:amy-path-sum}
A :: \ket{\bm x} \mapsto \lambda \cdot \sum_{\bm y} e^{2\pi i \cdot \phi(\bm x, \bm y)} \ket{f(\bm x, \bm y)}
\end{equation}
In this description $\lambda \in \mathbb C$ is a (typically irrelevant) global scalar factor; $\bm x \in \mathbb B^n$ where $n$ is the number of input qubits of $A$ is a bit string which we will typically, by minor abuse of notation, treat as lists of variables; the sum is over all bitstrings $\bm y \in \mathbb B^k$ for some $k$; 
$f = (f_1, \ldots, f_m)$ where each $f_i \in \mathbb B[\bm x, \bm y]$ is a Boolean (i.e. $\mathbb F_2$-valued) polynomial describing the $i$-th output basis state in terms of $\bm x$ and $\bm y$ and $m$ is the number of qubit outputs of $A$;
and $\phi \in \mathbb R[\bm x, \bm y]$ is a polynomial valued in the real numbers (or some sub-ring thereof) describing the phase of each summand, which is often referred to as the \textit{phase polynomial}.

\begin{example}\label{ex:path-sum-CNOT}
	The path-sum representation of the CNOT gate is $\ket{x_1 x_2} \mapsto \ket{x_1 (x_1\oplus x_2)}$. Hence, we have $\phi = 1$, no path variables $\vec y$, and $f=(f_1,f_2)$ where $f_1(x_1,x_2)=x_1$ and $f_2(x_1,x_2) = x_1\oplus x_2$.
\end{example}

While there are some practical benefits for keeping the data $f$ and $\phi$ separate, we will consider a slight variation on path-sum expressions that we call \textit{pure path-sum expressions}, which keeps all of the relevant data about $U$ in $\phi$ and treats inputs and outputs on the same footing. This will enable us to make an exact correspondence with ZH-diagrams in the sequel.

For a set $S$, let $S^*$ be the set of finite lists of elements of $S$.

\begin{definition}
A \textit{pure path-sum expression} consists of:
\begin{itemize}
\item A set of \textit{path variables} $\bm x = (x_1, \ldots, x_k)$,
\item an \textit{input signature} $\vec i \in \{1,\ldots k\}^*$,
\item an \textit{output signature} $\vec o \in \{1, \ldots k\}^*$,
\item and a \textit{phase polynomial} $\phi(\bm x) \in \R[\bm x]$.
\end{itemize}
\end{definition}

The \emph{associated linear operator} of a pure path-sum expression is
$
A := \sum_{\bm x} e^{2\pi i\phi(\bm x)} \ket{\bm x_{\vec o}}\bra{\bm x_{\vec i}}
$
where the sum is over all bitstrings $\bm x \in \mathbb{B}^k$
and $\bm x_{\vec o} = x_{o_1}x_{o_2}\cdots x_{o_{\lvert o \rvert}}$ and $\bm x_{\vec i} = x_{i_1}x_{i_2}\cdots x_{i_{\lvert i \rvert}}$. I.e.~$\bm x_{\vec o}$ contains only the components of $\bm x$ that are listed in the output signature $\vec o$ and can have repetitions and permutations of the elements.

We don't lose any expressiveness by using pure path-sum expressions. In fact, we can translate a path-sum expression in the form of~\eqref{eq:amy-path-sum} to a pure path-sum expression, at the cost of introducing some dummy variables:
\begin{equation}\label{eq:dummy-vars}
\lambda \sum_{\bm x, \bm y} e^{2\pi i \cdot \phi(\bm x)} \ket{f(\bm x, \bm y)}\bra{\bm x}
 =
\frac{\lambda}{2^m} \sum_{\bm v, \bm w, \bm x, \bm y} e^{2\pi i \cdot \big[\phi(\bm x) + \frac{1}{2}\sum_j v_j (w_j + f_j(\bm x, \bm y))\big]} \ketbra{\bm w}{\bm x}.
\end{equation}
Here we used the identity $\frac 1 2 \sum_{v_j} e^{i \pi v_j(w_j + f_j(\bm x, \bm y))} = \delta_{w_j, f_j(\bm x, \bm y)}$ to obtain the RHS above.

\begin{remark}\label{remark:monomials}
	It will be convenient in the sequel for $\phi$ to have a certain canonical form. 
	Namely that $\phi(\bm x) = \sum_i \lambda_i \phi_i(\bm x)$ for 
	$\lambda_i \in \R$ where each $\phi_i$ is a \emph{Boolean monomial}, 
	\ie $\phi_i(\bm x) = \prod_j x_j\cdot y_{i,j}$ for some $\vec{y_i} \in \mathbb{B}^k$.
	The procedure above introduces terms in $\phi$ that are not of this form, but are $e^{2\pi i \alpha f(\bm x)}$ where $f:\mathbb{B}^k \rightarrow \mathbb{B}$ is some Boolean function. Such functions $f$ can however always be written as an XOR of monomials. Using the identity $x\oplus y = x + y - 2x\cdot y$ repeatedly we can then write any Boolean function $f$ as $f=\sum_j \lambda_j f_j$ where $\lambda_j\in \R$ and $f_j$ are monomials. Hence $e^{2\pi i \alpha f(\bm x)} = e^{2\pi i \sum_j \alpha\lambda_j f_j(\vec x)}$
	is of the desired form.
\end{remark}

\begin{example}
	Applying the transformation of Eq.~\eqref{eq:dummy-vars} to the path-sum representation of Example~\ref{ex:path-sum-CNOT} yields
	$$\text{CNOT} \ = \ \ \frac14 \ \sum_{\mathclap{\substack{v_1,v_2,\\w_1,w_2,\\x_1,x_2}}} e^{2\pi i \cdot \frac 1 2 \big[v_1(w_1+x_1) + v_2(w_2+(x_1\oplus x_2))\big]} \ketbra{w_1,w_2}{x_1,x_2}.$$
	Further applying $x_1\oplus x_2 = x_1 + x_2 - 2x_1\cdot x_2$, and eliminating monomials with coefficient $1$, we can simplify this to
	\begin{equation}\label{eq:cnot-path-sum}
	\text{CNOT} \ = \ \ \frac12 \ \sum_{\mathclap{\substack{v_1,v_2,\\w_1,w_2,\\x_1,x_2}}} e^{2\pi i \cdot \big[
	\frac 1 2 v_1w_1 +
	\frac 1 2 v_1x_1 +
	\frac 1 2 v_2w_2 +
	\frac 1 2 v_2x_1 +
	\frac 1 2 v_2x_2
	\big]} \ketbra{w_1,w_2}{x_1,x_2},
	\end{equation}
	which is indeed a pure path-sum.
	It can actually be further simplified to the path-sum of Eq.~\ref{eq:CNOT-path-sum-ZH} later on by using the identity $\sum_{v_1,w_1}(-1)^{v_1w_1+v_1x_1}\ket{w_1} = 2\ket{x_1}$.
\end{example}

A pure path-sum allows easy repetition of variables in the input as well as the output, so that we can succinctly write linear maps such as the following one representing a Z-spider with 2 inputs and 1 output:
\[
A
\ :=\ 
\sum_{x} e^{2\pi i \cdot 0} \ket{x}\bra{xx}
\]
While it is possible to write these maps using a standard path-sum, this requires additional dummy variables, e.g.
$
M
\ ::\ \ket{x_0 x_1} \mapsto
\frac 1 2 \sum_{v} e^{2\pi i \cdot \big[ \frac 1 2 (v (x_0 + x_1)) \big]} \ket{x_0}
$.


In \cite{Amy18}, several reduction rules were presented for path-sum expressions. Each of these 4 rules removes at least one path-variable from the expression. We present these rules, translated into pure path-sum expressions, in Figure~\ref{fig:rewrite}.
\begin{figure}[!tb]
	\begin{align*}
	\lambda\sum_{y_0,\bm x}  e^{2\pi i R(\bm x)} \ket{\bm x_{\vec o}}\bra{\bm x_{\vec i}} 
		&\reduces 2\lambda \sum_{\bm x} e^{2\pi i R(\bm x)} \ket{\bm x_{\vec o}}\bra{\bm x_{\vec i}}  \qquad\qquad\qquad\qquad\qquad\textsf{[Elim]}\\
	\lambda\sum_{y_0,\bm x}  e^{2\pi i \left(\frac{1}{4}y_0 + \frac{1}{2}y_0 Q(\bm x) + R(\bm x)\right)} \ket{\bm x_{\vec o}}\bra{\bm x_{\vec i}} 
		&\reduces \sqrt{2}\lambda\sum_{\bm x}e^{2\pi i\left(\frac{1}{8} -\frac{1}{4}\lift{Q}(\bm x) + R(\bm x)\right)}\ket{\bm x_{\vec o}}\bra{\bm x_{\vec i}} \qquad\qquad\quad\textsf{[$\omega$]}\\
	\lambda\sum_{y_0,y_1,\bm x}  e^{2\pi i \left(\frac{1}{2}y_0(y_1 + Q(\bm x)) + R(y_1,\bm x)\right)}\ket{\bm x_{\vec o}}\bra{\bm x_{\vec i}} 
		&\reduces 2\lambda\sum_{\bm x}   e^{2\pi i\left(R[y_1\gets \lift{Q}]\right)(\bm x)}\ket{\bm x_{\vec o}}\bra{\bm x_{\vec i}} \qquad\qquad\qquad \quad \textsf{[HH]}
	\end{align*}
	\vspace{-0.6cm}
	\begin{align*}
		{}\ \ &\lambda\sum_{y_0,y_1,\bm x} e^{2\pi i \left(\alpha y_0 X(\bm x) + \frac{1}{2} y_0 Q(\bm x) + \frac{1}{2} y_0y_1 + \frac{1}{2}y_1Q'(\bm x) + \beta y_1(1-X(\bm x)) + R(\bm x)\right)}\ket{\bm x_{\vec o}}\bra{\bm x_{\vec i}}\qquad\qquad\qquad\qquad{} \\
		&\reduces  2\lambda\sum_{\bm x}e^{2\pi i \left(\frac{1}{2}Q(\bm x)Q'(\bm x) + \alpha X(\bm x) \lift{Q'}(\bm x) + \beta (1-X(\bm x)) \lift{Q}(\bm x) + R(\bm x)\right)}\ket{\bm x_{\vec o}}\bra{\bm x_{\vec i}} \qquad\qquad\qquad\qquad\qquad\ \  \textsf{[Case]}
	\end{align*}
	\vspace{-0.7cm}
\caption{Pure path-sum reduction rules. Here $\lambda\in \C$, $\alpha,\beta\in\R$, $Q,Q',X:\mathbb{B}^k\rightarrow \mathbb{B}$ are Boolean
polynomials and $R:\mathbb{B}^k\rightarrow \R$ is any semi-Boolean function.
The symbol $\lift{P}$ for a Boolean polynomial $P:\mathbb{B}^n\rightarrow \mathbb{B}$ represents its \textit{lifting} $\lift{P}:\mathbb{B}^n\rightarrow \R$ that is defined inductively by $\lift{x} = x$, $\lift{PQ} = \lift{P}\cdot \lift{Q}$ and $\lift{P\oplus Q} = \lift{P} + \lift{Q} - 2\lift{PQ}$. Hence, it maps a Boolean polynomial to an integer polynomial which has the same value over the Booleans. Finally, $R[y_1\gets \lift{Q}]$ indicates that every occurrence of $y_1$ in the function $R$ is replaced by the value of $\lift{Q}(\vec x)$.
}\label{fig:rewrite}
\end{figure}
Of these rules, [Elim] is the easiest to understand: a single variable that does not occur in any other part of the expression can be safely removed. The other rules are however considerably more opaque in their interpretation. 
In the next section we will see that translated into the ZH-calculus, they gain an intuitive meaning.

\section{Translating path-sums into ZH-diagrams}
\label{s:bijection}

In this section we will see that pure path-sums can be straightforwardly represented by hypergraph-like ZH-diagrams and vice versa. We will use this correspondence to translate the path-sum reduction rules into rules for the ZH-calculus, and prove them diagrammatically. In the process we will see that these rules correspond to quite canonical hypergraph-theoretic operations.

First, let us describe the translation between ZH-diagrams and pure path-sum expressions. Recall that we write $[n] := \{1,2,\ldots,n\}$

\begin{definition}\label{def:zh-data}
A hypergraph-like ZH-diagram $D$ is given by the following data:
\begin{enumerate}
\item Finite sets $[k]$ and $[l]$ of \textit{spiders} and \textit{H-boxes}, respectively,
\item a function $H : [l] \to \mathcal P([k])$ where $H(i) = J$ when the $i$-th H-box is connected to all of the spiders in $J$,
\item a set of phase angles $\{\alpha_1, \ldots, \alpha_l\}$ where $\alpha_j \in [0, 2\pi)$ is the label on H-box $j$,
\item an input function $I : [m] \to [k]$ and an output function $O: [n] \to [k]$where $I(i) = j$ when the $i$-th input of $D$ is connected to spider $j$ and similarly  $O(i) = j$ when the $i$-th output is connected to spider $j$.
\end{enumerate}
\end{definition}

Note that the first two items in Definition~\ref{def:zh-data} specify an undirected hypergraph, whence the name.

Following \cite{PQP} let us introduce notation for the computational basis states and effects:
\tikz{\node [style=down point] (0) at (0,-0.1) {$x$};\node [style=none] (1) at (0,0.4) {}; \draw (0) to (1);}
$:= \ket{x}$ 
and \tikz{\node [style=up point] (0) at (0,0.1) {$x$};\node [style=none] (1) at (0,-0.4) {}; \draw (0) to (1);}
$:= \bra{x}$. 
Then starting with a generic hypergraph-like ZH-diagram, we can expand each of the Z-spiders as a sum over basis elements as follows:
\[
\tikzfig{bij-proof}
\ \ =\ \ 
\sum_{x_1...x_k} \ \tikzfig{bij-proof-2}
\ \ =\ \ 
\sum_{x_1...x_k} \ 
\tikzfig{bij-proof-3}
\]
where the sets $X_1, \ldots, X_l$ are defined as $X_j := \{ x_i \,|\, i \in H(j) \}$. Note we can write basis elements  `sideways' without ambiguity as $(\ket x)^T = \bra x$. In the RHS above, each H-box contributes a phase $\alpha_j$ when all of the variables in $X_j$ are $1$ (i.e. when $\prod X_j = 1$) and a phase of $0$ otherwise. So the whole ZH-diagram evaluates to:
\begin{equation}\label{eq:path-sum-eval}
\sum_{x_1...x_k} e^{i \big[ \sum_j \alpha_j \cdot \prod X_j \big]}
\ket{x_{O(1)}\ldots x_{O(n)}}
\bra{x_{I(1)}\ldots x_{I(m)}}
\end{equation}
Letting $\phi(\bm x) := (\sum_j \alpha_j \cdot \prod X_j)/2\pi$, $\vec i := [I(1),...,I(m)]$, and $\vec o := [O(1),...,O(n)]$, we see that \eqref{eq:path-sum-eval} is a generic pure path-sum expression.

Conversely, from any pure path-sum expression $e$, representing the phase polynomial $\phi(\bm x)$ as a sum of monomials (cf.~Remark~\ref{remark:monomials}), we can reconstruct $H$ and $\{\alpha_j\}_j$ from the phase polynomial $\phi(\bm x)$ and the functions $I$ and $O$ from the lists $\vec i$ and $\vec o$, such that evaluating the ZH-diagram as in \eqref{eq:path-sum-eval} gives $e$. If we ensure $\phi(\bm x)$ has no repeated monomials, this reconstruction is furthermore unique, up to re-indexing of H-boxes.

Let $\zp{.}$ be the operation of translating a ZH-diagram to a pure path-sum expression by evaluation, and let $\pz{.}$ by the process of reconstructing a ZH-diagram from a pure path-sum expression. By construction these satisfy:
$\zp{\pz{e}} = e$ and
$\pz{\zp{D}} \cong D$, for any pure path-sum expressions $e$ and hypergraph-like ZH-diagrams $D$,
where `$\cong$' means equal up to permutation of the sets of spiders and H-boxes. In \cite{vilmart2020structure} it is shown that this construction is actually functorial and leads to an equivalence of categories.

\begin{example}
	The usual diagram for the CNOT gate in the ZH-calculus is given by the LHS below.
	\[\tikzfig{cnot-zh} \ = \ \tikzfig{cnot-zh-hypergraph}\]
	Here the RHS is the hypergraph-like diagram resulting from applying Lemma~\ref{lem:ZH-to-hypergraph} to the LHS.
	Translating it into a path-sum (and recalling that by convention unlabelled H-boxes have a label of $-1$) gives
	\begin{equation}\label{eq:CNOT-path-sum-ZH}
	\sum_{x_1,x_2,x_3,x_4\in \mathbb{B}} e^{2\pi i \frac12 (x_2x_3 + x_1x_3 + x_3x_4)}\ketbra{x_1x_4}{x_1x_2}.
	\end{equation}
	Translating the CNOT path-sum of Eq.~\eqref{eq:cnot-path-sum} into a ZH-diagram gives:
	\[\tikzfig{cnot-zh-hypergraph2}\]
\end{example}

By applying $\pz{.}$ to both sides of the rules of Figure~\ref{fig:rewrite}, we get an equation between (families of) ZH-diagrams. For [Elim] it is easy to see that the corresponding ZH-calculus rule is the simple removal of an arity-0 spider representing the scalar $2$. For the other rules it is harder to see directly what the translation should be. We cover each of the rules [$\omega$], [HH] and [Case] in the next subsections.

\subsection{Hyper-local complementation}\label{sec:hlc}

In this section, we will look at the [$\omega$] rule from Figure~\ref{fig:rewrite} and show it is equivalent to a new simplification we can derive using the ZH-calculus, which we call \textit{hyper-local complementation}.

\begin{definition}
	Let $G=(V,E)$ be a simple graph and $u\in V$ a vertex. The {\em local complementation of $G$ about the vertex $u$}, written as $G\star u$, is the graph $(V, E')$ where $\{v,w\}\in E'$ iff $\{v,w\}\not\in E$ when $v$ and $w$ are both neighbours of $u$ in $V$, and $\{v,w\}\in E'$ iff $\{v,w\}\in E$ otherwise. In other words: $G\star u$ is the same graph as $G$ except that neighbours of $u$ are connected iff they are not connected in $G$.
\end{definition}
Local complementation has featured in quantum information theory as it can be used to combinatorially capture local Clifford equivalence of certain stabiliser states called \textit{graph states}~\cite{NestMBQC}. It has a corresponding rewrite rule in the ZX-calculus~\cite{DuncanPerdrixGraphStates}.
Local complementation has also been used in the context of circuit simplification with the ZX-calculus~\cite{DKPW19}. In that paper it was shown that a Z-spider labelled by a phase of $\pm \frac\pi 2$ can be deleted from a ZX-diagram without changing the linear map, as long as one first performs a local complementation about the vertex. Translating the rule from \cite{DKPW19} in ZH notation, we obtain:
\begin{equation}\label{eq:local-comp}
\tikzfig{local-comp}
\end{equation}
where the right-hand side is a totally connected graph of Z spiders, connected via H-boxes. This rule can be proven in the ZX-calculus, and hence by completeness, also in the ZH-calculus.

We can extend this to \textit{hyperlocal complementation} by introducing a !-box on each of the Z-spiders at the boundary:
\begin{proposition}\label{prop:hlocal-comp}
	The following hyperlocal complementation rule holds in the ZH-calculus.
	\begin{equation}\label{eq:hlocal-comp}
	\tikzfig{hlocal-comp}
	\end{equation}
\end{proposition}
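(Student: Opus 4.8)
The plan is to derive the hyper-local complementation rule \eqref{eq:hlocal-comp} from the ordinary local complementation rule \eqref{eq:local-comp} by a !-box induction. Recall that a !-box expression is shorthand for the family of diagrams obtained by expanding each !-box $0, 1, 2, \ldots$ times. So I would set up the proof as an induction on the number $t$ of copies of the boundary !-box (i.e.\ the number of Z-spiders attached to the central spider). The base case $t=0$ should reduce to a scalar identity which follows directly from the rules in Figure~\ref{fig:ZH-rules} (the central $\pm\frac\pi 2$ spider with no neighbours contributes only a global scalar, matching the empty totally-connected hypergraph on the right). The case $t=1$ is essentially a single application of a spider-fusion/colour-change manipulation. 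The meat is the inductive step.

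For the inductive step, I would take the diagram with $t+1$ boundary spiders, single out one of them, and peel off its !-box copy. Concretely, I would rewrite so that the distinguished boundary spider and its H-box connection to the centre are exposed, then apply the ordinary local complementation rule \eqref{eq:local-comp} --- which is already known to hold by completeness of the ZH-calculus --- to eliminate the central $\pm\frac\pi2$ spider, producing the totally-connected graph of H-boxes on the (now $t+1$) boundary spiders. The subtlety is that \eqref{eq:local-comp} as stated has a fixed number of neighbours, so strictly the right move is: \eqref{eq:local-comp} \emph{itself} is a !-boxed statement (the central spider has a !-box of neighbours), and the "hyper" version just adds one extra !-boxed wire/H-box on each boundary spider pointing outward to the rest of the diagram. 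So really I would prove \eqref{eq:hlocal-comp} by showing the two sides agree after we additionally expand the \emph{inner} boundary structure, using \eqref{eq:local-comp} to handle the central-spider elimination and then bookkeeping the new H-boxes.

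The key intermediate fact I expect to need is a lemma about how an H-box connected to a Z-spider that also carries an outgoing !-boxed wire distributes: expanding the outer !-box should turn a single hyperedge on the boundary spiders into hyperedges on all supersets obtained by adjoining subsets of the outer wires. This is exactly the kind of combinatorial identity governing hypergraph structure, and I would isolate it as its own step (it is morally the statement that "an H-box on a set $S$, with each vertex of $S$ fattened by a bundle of wires, equals the product of H-boxes over all ways of choosing the fattening"), proved by repeated use of \SpiderRule, \HFuseRule{} and Lemma~\ref{lem:sv}. Once that is in hand, the two sides of \eqref{eq:hlocal-comp} can be matched hyperedge-by-hyperedge.

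The main obstacle will be the bookkeeping of which hyperedges appear on the right-hand side after expanding all the !-boxes simultaneously --- making sure that "totally connected graph of Z-spiders connected via H-boxes" on the hyper side means precisely an H-box for every non-empty subset of boundary spiders (cf.\ $\mathbb{B}^n_*$ appearing in Proposition~\ref{prop:fourier-transform}), and that the induction on $t$ generates exactly these and no spurious extras. The phase arithmetic ($\pm\frac\pi2$ on the centre producing the specific labels on the resulting H-boxes) is the other thing to track carefully, but that part is essentially inherited verbatim from \eqref{eq:local-comp} and should not pose a genuine difficulty beyond careful transcription.
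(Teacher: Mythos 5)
Your instinct to reduce \eqref{eq:hlocal-comp} to the known rule \eqref{eq:local-comp} is the right one, but there are two genuine gaps in how you propose to carry it out. First, your target for the right-hand side is wrong. You plan to verify that the ``totally connected graph'' on the hyper side means ``an H-box for every non-empty subset of boundary spiders'', by analogy with the $\mathbb{B}^n_*$ indexing in Proposition~\ref{prop:fourier-transform}. It does not: as the paper's own analysis of \textsf{[$\omega$]} makes explicit, the phase coefficient attached to a subset of $r$ neighbour-groups is $-\tfrac14(-2)^{r-1}$, which gives a $-\tfrac\pi2$-labelled H-box on each single group ($r=1$), an unlabelled $(-1)$ H-box on each pair of groups ($r=2$), and an \emph{integer} for every $r\ge 3$, so all higher-order hyperedges carry trivial phases and vanish. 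The right-hand side of \eqref{eq:hlocal-comp} is a complete \emph{graph} at the level of the neighbour-groups, not a complete hypergraph, and the cancellation of the $r\ge 3$ terms is precisely the content your plan never confronts. Second, your ``key intermediate fact'' is false as stated: replacing a boundary Z-spider of \eqref{eq:local-comp} by a !-boxed group turns the arity-2 H-box into a \emph{single} higher-arity H-box (one hyperedge representing $a^{\prod_i x_i}$), not a product of H-boxes over ways of choosing the fattening. The product-over-subsets structure only arises via the Fourier transform of Proposition~\ref{prop:fourier-transform}, and then it comes with the exponents $(-2)^{|\vec b|-1}$ that are exactly what make the $r\ge3$ terms disappear; a lemma derived only from (ZS1), (HS1) and Lemma~\ref{lem:sv} cannot produce this.

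Relatedly, your induction is set on the wrong parameter. The statement has two independent degrees of freedom: the number $n$ of neighbour H-boxes of the central spider (handled by ellipses, since the complete graph on the right cannot be captured by !-boxes) and the sizes of the $n$ !-boxed groups. All of the new content relative to \eqref{eq:local-comp} lives in the group sizes, whereas your induction ``on the number of Z-spiders attached to the central spider'' with base case $t=0$ is an induction on $n$, and so never engages with the hyper structure at all (for group size $1$ everywhere the proposition \emph{is} \eqref{eq:local-comp}). The paper's own proof is a single short chain of diagrammatic rewrites with no induction, essentially massaging the hyperedges so that the ordinary rule \eqref{eq:local-comp} applies to the central spider and then pushing the resulting complete graph back onto the groups. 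If you want to keep an inductive route, induct on the size of one group with base case \eqref{eq:local-comp} itself, and make the inductive step track how adjoining one spider to group $j$ changes the arity of the $j$-th singleton H-box and of every pair H-box involving group $j$ on the right.
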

\begin{proof}
	See Appendix~\ref{sec:proofs}.
\end{proof}

\begin{remark}
	That a version of hyperlocal complementation can be implemented on a hypergraph state using local operations was first found in \cite{gachechiladze2017graphical}. A version of hyperlocal complementation in the ZH-calculus is also presented in~\cite{BK18}.
\end{remark}

\begin{remark}
As in \cite{BK18}, we adopt a `hybrid' notation mixing !-boxes with ellipses to express the hyperlocal complementation rule. This is due to a limitation of !-box notation, which is not rich enough to capture complete graphs of unbounded size~\cite{vladimirthesis}.
\end{remark}

Let us consider \textsf{[$\omega$]} in Figure~\ref{fig:rewrite} in more detail to find the connection between it and Eq.~\eqref{eq:hlocal-comp}.
The phase-polynomial on the LHS of \textsf{[$\omega$]} is $\phi(y_0,\bm x) = \frac14 y_0 + \frac12 y_0Q(\bm x) + R(\bm x)$ where $Q$ is a Boolean polynomial and $R$ is an (for us) irrelevant semi-Boolean function. The variable $y_0$ corresponds to the central spider of the LHS of Eq.~\eqref{eq:hlocal-comp} while the $\frac14 y_0$ term in the phase polynomial corresponds to the $\frac\pi2$-labelled H-box.
We can write $Q$ as $Q(\bm x) = \oplus_{j=1}^n m'_j(\bm x)$ for some monomials $m'_j$.
Note that we have the identity $e^{2\pi i \frac12 y_0 Q(\bm x)} = e^{2\pi i \frac12 \sum_j y_0 m'_j(\bm x)}$.
Hence, the term $\frac12 y_0Q(\vec x)$ contributes an H-box to the ZH-diagram for each monomial $m'_j$, which are the neighbouring H-boxes of the central spider in Eq.~\eqref{eq:hlocal-comp}.

The RHS phase polynomial in \textsf{[$\omega$]} is $\phi(\bm x) = \frac18 - \frac14\lift{Q}(\bm x) + R(\bm x)$. The function $R$ is again irrelevant, and the term $\frac18$ becomes the $e^{i\pi/4}$ scalar in the RHS ZH-diagram.
The lifting of $Q$ is given by
\begin{equation}\label{eq:lifting-Q}
\lift{Q} = \sum_{r=1}^n ~~\sum_{E \in \mathcal{P}_r([n])} (-2)^{r-1} \prod\limits_{i\in E} m'_i
\end{equation}
where $\mathcal{P}_r([n])\subseteq \mathcal{P}([n])$ is the set of subsets of $[n]$ that contain exactly $r$ elements. We can then write $\lift{Q} = \sum_j m'_j + \frac{1}{2} \sum_{j<k} m'_j m'_k + f(\bm x)$ where the first two sums represent the terms corresponding to $r=1$ and $r=2$, and $f$ contains the remaining terms.
Hence,
$$e^{2\pi i (-\frac{1}{4} \lift{Q})} = e^{2\pi i (-\frac{1}{4} \sum_j m'_j + \frac{1}{2} \sum_{j<k} m'_j m'_k + f(\bm x))} = e^{2\pi i (-\frac{1}{4} \sum_j m'_j + \frac{1}{2} \sum_{j<k} m'_j m'_k)}$$
as $f$ is valued in the integers so that it does not contribute to the phase.
Each of the $e^{-i\frac\pi2 m'_j}$ terms contributes an $-\frac\pi2$-labelled H-box
connected to the spiders of the monomial $m'_i$, while each $e^{i\pi m'_jm'_k}$ term contributes an unlabeled H-box connected to all the spiders of both $m'_j$ and $m'_k$. This indeed results in the fully connected graph in the RHS of Eq.~\eqref{eq:hlocal-comp}.

\subsection{Fourier Hyper pivot}
\label{s:hyper-pivot}

On the LHS of [HH] in Figure~\ref{fig:rewrite} the phase polynomial is $\phi(y_0,y_1,\bm x) = \frac12 y_0y_1 + \frac12 y_0 Q(\bm x) + R(y_1,\bm x)$. Hence, in the corresponding ZH-diagram we see that $y_0$ and $y_1$ are connected by an arity-2 exponentiated H-box with a phase of $2\pi \frac12 = \pi$, and hence is a regular Hadamard gate. Writing the Boolean polynomial $Q$ as $Q = \bigoplus_j^n m'_j$ where the $m'_j$ are monomials as in the previous section we see that each monomial introduces an H-box to the ZH-diagram that is connected to the spider of $y_0$.

We can separate the action of $y_1$ in $R$ as $R(\bm x,y_1) = S(\bm x)y_1 + T(\bm x)$ for some functions $S$ and $T$ where we can furthermore expand $S$ as $S(\x) = \sum_j \frac{\alpha_j}{2\pi} m_j(\x)$ for some monomials $m_j$. Hence, in the translated ZH-diagram $y_1$ shares an $\alpha_j$-valued H-box with each of the spiders of the monomials $m_j$.
Combining these observations we see that the relevant part of the pure path-sum on the LHS of [HH] is:
\ctikzfig{fhp-LHS}

To write the RHS of [HH] we need to represent $R[y_1 \gets \lift{Q}] = S(\bm x)\cdot \lift{Q}(\bm x) + T(\bm x)$. By representing elements of the powerset $\mathcal{P}([n])$ as bitstrings in $\mathbb{B}^n$ we can rewrite the lifting of $Q$ in Eq.~\eqref{eq:lifting-Q} as
$$\lift{Q} = \sum_{\mathbf{b}\in \B^n_*} (-2)^{\vert \mathbf{b} \vert -1} \prod\limits_{i\in [n]} (m'_i)^{b_i}.$$
Here $\prod\limits_{i\in [n]} (m'_i)^{b_i}$ is again a Boolean monomial that is $1$ precisely when $m'_i$ is $1$ for all $i$ for which $b_i=1$.

Hence, the relevant term in the RHS phase polynomial becomes
$$S \cdot \lift{Q} = \sum\limits_j \sum\limits_{\mathbf{b}\in \B^n_*} (-2)^{\vert \mathbf{b} \vert -1} \frac{\alpha_j}{2\pi} m_j \prod\limits_{i\in [n]} (m'_i)^{b_i}$$
In the translation to the ZH-calculus we hence get H-boxes with a phase of $(-2)^{\vert \mathbf{b} \vert -1} \alpha_j$ that are connected to all the spiders of $m_j$ and to all the spiders of the monomial $\prod\limits_{i\in [n]} (m'_i)^{b_i}$. We can represent this using the disconnect box of Eq.~\eqref{eq:disconnect-box}, so that the corresponding ZH-diagram is:
\ctikzfig{fhp-RHS}



Hence, [HH] is equivalent to the following result for ZH-diagrams, that we call the \emph{Fourier hyper-pivot}.

\begin{theorem}[Fourier hyper-pivot]\label{thm:fhp}
        The following equation holds in the ZX-calculus for any set of real numbers $\alpha_1,\ldots,\alpha_m$.
	\beq
                \tikzfig{figures/fourierHyperPivotExponential}
		\eeq
\end{theorem}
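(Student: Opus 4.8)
The plan is to derive the Fourier hyper-pivot by combining the two main tools already available in the paper: the hyperlocal complementation rule of Proposition~\ref{prop:hlocal-comp} and the graphical Fourier transform of Proposition~\ref{prop:fourier-transform}, applied to the two-spider ``pivot-like'' configuration in which the spiders $y_0$ and $y_1$ are joined by a Hadamard gate. First I would recall the analogous \emph{graph-like} story: in the ZX-calculus a pivot about an edge $\{y_0,y_1\}$ is two successive local complementations (about $y_0$, then $y_1$, then $y_0$ again) followed by deletion of the two interior spiders. I would set up the hypergraph analogue here by starting from the LHS diagram \texttt{fhp-LHS}, where $y_0$ carries only the $m'_j$-H-boxes (and the Hadamard to $y_1$) and $y_1$ carries the $\alpha_j$-valued H-boxes (and the Hadamard to $y_0$), and no $\frac{\pi}{2}$-phases are present. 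The goal is to eliminate both $y_0$ and $y_1$.

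The key steps, in order, would be: (1) use the Fourier transform rule (Proposition~\ref{prop:fourier-transform}) to rewrite the ``local'' neighbourhood around $y_0$ — the collection of $m'_j$-H-boxes feeding into $y_0$ together with the Hadamard edge to $y_1$ — turning the sum over the value of $y_0$ into a single H-box whose label encodes the XOR $Q=\bigoplus_j m'_j$; concretely this is where $y_0$ gets integrated out and the combinatorial expansion of $\lift{Q}$ in Eq.~\eqref{eq:lifting-Q} first appears, producing H-boxes with the $(-2)^{|\mathbf b|-1}$ coefficients connected to subsets of the $m'_i$-spiders. (2) Now $y_1$ is connected via a Hadamard to a spider carrying the phase polynomial $\lift{Q}$; this is exactly the situation where one can eliminate a spider connected by a Hadamard to a ``phase gadget'' — i.e. a Fourier-style absorption that performs the substitution $y_1 \gets \lift{Q}$ into $R$, merging the $\alpha_j$-H-boxes of $y_1$ with the $\lift{Q}$-structure and yielding the products $m_j \prod_i (m'_i)^{b_i}$ with coefficients $(-2)^{|\mathbf b|-1}\alpha_j$. (3) Collect the resulting H-boxes, use Lemma~\ref{lem:sv} and the H-box fusion rule (M) to merge parallel edges and H-boxes sharing a neighbourhood, and repackage the doubly-indexed family of H-boxes using the disconnect-box notation of Eq.~\eqref{eq:disconnect-box} and annotated !-boxes, arriving at \texttt{fhp-RHS}. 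Finally, track the scalar: the two eliminations each contribute a factor of $2$, matching the $2\lambda$ on the RHS of [HH]. Since every rule invoked (Fourier transform, (M), (DC), !-box manipulations) is sound, and by completeness~\cite{BK18} the whole derivation is valid in the ZH-calculus, the theorem follows; alternatively, one can simply evaluate both sides to the pure path-sum expressions on the two sides of [HH] and invoke the bijection $\pz{.},\zp{.}$ together with the already-established soundness of [HH].

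The main obstacle I expect is the \emph{bookkeeping of the iterated Fourier expansion}: the lifting $\lift{Q}$ of a multi-term XOR $Q=\bigoplus_{j=1}^n m'_j$ produces $2^n-1$ H-boxes indexed by nonzero bitstrings $\mathbf b\in\B^n_*$, and then each of these must be ``multiplied'' against each $m_j$ coming from $S$, all while correctly tracking that the integer part $f(\bm x)$ of $-\tfrac14\lift{Q}$ (the contribution from monomials with even coefficient) drops out. Getting the !-box / disconnect-box presentation of this doubly-indexed family exactly right — and checking that overlapping !-boxes produce precisely the intended connectivity, as in the right-hand diagram of the annotated-!-box example earlier — is the delicate part; the underlying algebra is just repeated use of $\lift{P\oplus Q}=\lift P+\lift Q-2\lift{PQ}$ and $x\oplus y = x+y-2xy$. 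A secondary subtlety is ensuring the diagram stays hypergraph-like throughout (no parallel edges, no duplicated H-boxes), which is handled by Lemmas~\ref{lem:sv} and \ref{lem:ZH-to-hypergraph} but must be invoked at the right moments. I would relegate the full diagrammatic derivation to the appendix and present here only the schematic chain LHS $\to$ (Fourier at $y_0$) $\to$ (Fourier at $y_1$) $\to$ (cleanup) $\to$ RHS, with the scalar accounted for.
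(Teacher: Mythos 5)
You have correctly identified the two named ingredients --- pivoting and the Fourier transform of Proposition~\ref{prop:fourier-transform} --- but your steps assign the wrong job to the Fourier transform, and this is a genuine gap. The Fourier transform rule only re-expresses a phase function (it trades one $n$-ary labelled H-box for a family of parity-indexed H-boxes with exponents $(-2)^{|\mathbf b|-1}$); it does not ``integrate out'' a spider. So your step (1), which claims to eliminate $y_0$ by Fourier-transforming its neighbourhood, is not a valid rule application, and step (2)'s ``Fourier-style absorption'' of $y_1$ is likewise not licensed by anything established so far. The engine that actually deletes the two interior spiders is the \emph{regular hyper-pivot} of Proposition~\ref{prop:hyper-pivot} (a bialgebra-type rule), which your step-by-step plan never invokes: the paper's proof first passes to the regular-H-box version of the statement and unfuses the labelled H-boxes so that the two central spiders see only unlabelled ones, applies the hyper-pivot, and only then uses Fourier-transform-style reasoning to repackage the resulting connectivity into the $\chi(\lambda,\mathbf b)=\lambda^{(-2)^{|\mathbf b|-1}}$-labelled H-boxes of the RHS. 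Relatedly, the ``pivot equals three hyperlocal complementations'' route you sketch at the outset would require introducing and cancelling $\pm\pi/2$ phases that are absent from the \textsf{[HH]} configuration, and you abandon it without carrying it through.

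The second gap is that the final combinatorial step is not mere bookkeeping: showing that the H-boxes produced by the pivot-plus-Fourier combination carry exactly the labels $\chi(\lambda,\mathbf b)$ on the subsets indexed by $\mathbf b\in\B^n_*$ requires a dedicated induction on $m$ (Lemma~\ref{lem:fhp-helper} in the paper), together with Lemma~\ref{lem:sv} and (M) to collapse parallel wires and coincident H-boxes; you flag this as ``delicate'' but offer no mechanism for it. Your fallback --- evaluate both sides to path-sums, cite the soundness of \textsf{[HH]} from \cite{Amy18}, and invoke completeness of the ZH-calculus --- does establish the equation as stated, but it is a non-constructive detour that yields no explicit derivation and defeats the purpose of the theorem, which is precisely to realise \textsf{[HH]} by concrete graphical rewrites.
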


To see why we call this a Fourier hyper-pivot, let us introduce the notion of a regular pivot.









\begin{definition}
	Let $G=(V,E)$ be a simple graph and $\{u,v\}\in E$ a connected pair of vertices. The {\em pivot of $G$ along the edge $uv$} is the graph $G\star u\star v \star u$.
\end{definition}
Whereas local complementation complements the connectivity of the neighbours of a vertex, a pivot along $uv$ complements the connectivity between three groups of vertices: those connected to $u$ and not to $v$, those connected to $v$ but not to $u$, and those connected to both $u$ and $v$.

In the ZX-calculus, pivoting on a graph-like diagram can be proven using the bialgebra rule between the Z- and X-spider~\cite{DP3}. The version we are interested in is where we pivot on two connected spiders $u$ and $v$ followed by the deletion of these same vertices. This can be represented particularly elegantly using !-boxes:

\begin{proposition}\label{prop:pivot}\cite{DKPW19}
	The following pivoting rule holds in the ZH-calculus for any $n,m\in \N$.
	\[\tikzfig{pivot}\]
\end{proposition}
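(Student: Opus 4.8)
The plan is to prove Proposition~\ref{prop:pivot} by reducing it to the bialgebra rule \StrongCompRule{} of the ZH-calculus, together with the Hopf law and spider fusion, all packaged neatly in !-box notation. First I would unfold the statement: the left-hand side is a pair of connected Z-spiders $u$ and $v$, where $u$ carries a !-box over its $n$ other neighbours and $v$ carries a !-box over its $m$ other neighbours, the connection between $u$ and $v$ being a single Hadamard edge (so that the two spiders are in opposite colour after pushing one Hadamard through, i.e.\ a Z-spider connected to an X-spider). The right-hand side should be the complete bipartite graph of Hadamard edges between the $n$ neighbours of $u$ and the $m$ neighbours of $v$, with the spiders $u$ and $v$ deleted --- this is exactly the overlapping-!-box notation for a fully connected bipartite graph that was illustrated earlier in Section~\ref{s:ZH-dfn}.

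The key steps, in order: (1) Rewrite the Hadamard edge between $u$ and $v$ using the definition \eqref{eq:grey-spider} of the X-spider, turning the picture into a Z-spider $u$ directly connected to an X-spider $v$ (the colour change absorbs the Hadamard), with the H-box legs of $u$ and $v$ unchanged on the outside. (2) Apply the strong complementarity / bialgebra rule \StrongCompRule{} to the connected Z-X pair: this is precisely the rewrite that replaces a Z-spider wired to an X-spider by the complete bipartite graph of wires between their respective neighbourhoods, introducing one fresh Z-spider and one fresh X-spider on each crossing wire (or, in the phase-free ZH presentation, the appropriate arrangement). In !-box form the bialgebra rule already has the shape ``two connected spiders with !-boxes $\mapsto$ complete bipartite connectivity'', so this step is essentially a direct citation once the colours are matched up. (3) Push the Hadamard boxes (the H-boxes that were the edges from $u$ and $v$ to their neighbours) through the newly created spiders using \HHRule{} and the colour-change rule, and fuse any adjacent same-colour spiders with \SpiderRule{}, cleaning up so that each bipartite wire carries exactly one H-box, matching the hypergraph-like form of the right-hand side. (4) Check the scalar: the bialgebra rule and the Hopf law typically generate scalar factors, and one must verify they are trivial (or absorbed into the convention), which is routine but must not be skipped.

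I expect the main obstacle to be bookkeeping at the level of !-boxes rather than any conceptual difficulty: one has to be careful that the !-box over $u$'s neighbours and the !-box over $v$'s neighbours interact correctly under the bialgebra rule to produce genuinely overlapping !-boxes (the fully-connected bipartite graph), and that the $n=0$ or $m=0$ edge cases degenerate correctly (the spider with an empty !-box should just disappear, leaving a scalar). A secondary subtlety is that the Hadamard edge between $u$ and $v$ means we are really applying the \emph{Hadamard-decorated} version of strong complementarity, so one either invokes a derived ``Hadamard bialgebra'' lemma or explicitly conjugates \StrongCompRule{} by Hadamards using \HHRule{}; keeping track of where the leftover Hadamards land, and fusing them into single H-box edges via Lemma~\ref{lem:sv}, is where the proof does its real (if routine) work. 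Since \StrongCompRule{} is one of the defining rules of the ZH-calculus and the rest is spider fusion and Hadamard manipulation, completeness is not even needed here --- the proof is purely equational. The full details I would relegate to Appendix~\ref{sec:proofs}.
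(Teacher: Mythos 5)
Your proposal is correct and takes essentially the same route the paper intends: the text itself observes that pivoting ``is implemented by the bialgebra rule between the Z- and X-spider,'' and the rule is obtained exactly as you describe --- colour-change one spider across the connecting Hadamard edge, apply the !-boxed (BA1), then clean up the leftover Hadamards, parallel edges and scalars via (HS2), (ZS1) and Lemma~\ref{lem:sv}. Note only that the paper never writes this argument out separately; it treats Proposition~\ref{prop:pivot} as the arity-2 instance of the hyper-pivot of Proposition~\ref{prop:hyper-pivot}, whose appendix proof rests on the same bialgebra mechanism.
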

This is indeed a pivot (followed by vertex deletions) as every vertex (spider) connected to the left pivoted vertex becomes connected via a 2-ary H-box to every neighbour of the right pivoted vertex. The reason we only distinguish two groups of vertices, instead of three, is because spiders that are connected to both the vertices also belong to both !-boxes, and hence also get the appropriate connectivity. Since these vertices belong to both !-boxes, they furthermore get connected to \emph{themselves}, a connection that can be simplified to a simple phase:
\ctikzfig{self-loop-removal}

By allowing each H-box in Proposition~\ref{prop:pivot} to have arbitrary arity, we can generalise the above rule to a \emph{hyper-pivot} followed by two vertex deletions. 

\begin{proposition}\label{prop:hyper-pivot}
The following hyper-pivot rule holds in the ZH-calculus for any $n,m\in \N$.
        \beq\tag{RHP}\label{eq:regular-hyper-pivot}
                \tikzfig{regularHyperPivot}
        \eeq
\end{proposition}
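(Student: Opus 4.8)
The plan is to reduce the hyper-pivot \eqref{eq:regular-hyper-pivot} to the ordinary pivot of Proposition~\ref{prop:pivot}, the point being that, for the purposes of a pivot, an arbitrary-arity H-box hanging off a pivoted spider behaves like a single ordinary neighbour carrying the conjunction of that H-box's remaining legs. Concretely, I would first rewrite each unlabelled H-box connecting the spider $u$ to a set $S$ of outer spiders as a $2$-ary H-box connecting $u$ to a fresh Z-spider $w$, together with a gadget forcing $w=\bigwedge S$; this gadget is just the ZH-calculus representation of the multi-input AND gate~\cite{BK18}, and when an outer spider lies in several such sets it is copied using \SpiderRule and fed into each gadget. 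Performing this on every H-box attached to $u$ and, symmetrically, on every H-box attached to $v$, the local picture around the edge $uv$ becomes exactly the shape of Proposition~\ref{prop:pivot}: two spiders joined by a Hadamard, each carrying a !-box of ordinary neighbours via $2$-ary H-boxes, with the AND-gadgets merely dangling off those neighbours and playing no further role.

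Next I would apply Proposition~\ref{prop:pivot}. This performs the two vertex deletions (removing $u$ and $v$), produces the scalar $2$, and connects every fresh spider $w_h$ from $u$'s side to every fresh spider $w'_{h'}$ from $v$'s side by a $2$-ary H-box. In contrast with the ordinary pivot, the fresh spiders on the two sides are all distinct, so no neighbour is ever identified with itself and the self-loop-to-phase step of Proposition~\ref{prop:pivot} has no analogue here: no residual phases appear. It then remains to absorb the AND-gadgets. Here I would use the elementary lemma that grafting an AND-gadget onto one leg of an H-box simply expands that leg into the gadget's input spiders (semantically, $(-1)^{c\cdot w}$ with $w=\bigwedge S$ equals the single-H-box term $(-1)^{c\cdot\prod_{s\in S}s}$); applying it at both ends of the H-box joining $w_h$ and $w'_{h'}$ turns it into one unlabelled H-box connected to $S_h\cup T_{h'}$. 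Because Z-spiders are idempotent, a spider lying in both $S_h$ and $T_{h'}$ contributes $x\cdot x=x$, so the resulting neighbourhood is the set-theoretic union with no multiplicities --- precisely the doubly-!-boxed right-hand side of \eqref{eq:regular-hyper-pivot}. Finally scalars are cleaned up with \ScalarRule.

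All rewriting must be carried out !-box-wise: the !-box on the H-boxes of $u$, the !-box on those of $v$, and the nested !-box on the right-hand side are preserved at every step, and the complete bipartite connectivity on the right is produced automatically by Proposition~\ref{prop:pivot}. As in Proposition~\ref{prop:hlocal-comp}, I would use the hybrid !-box/ellipsis notation to express the (unbounded) sets $S_h,T_{h'}$ of outer spiders.

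The step I expect to be the main obstacle is the gadget-absorption lemma and its interaction with shared outer spiders: introducing the AND-gadget and then re-absorbing it into an H-box leg is individually routine, but one has to verify carefully that an outer spider belonging to several of the $S_h$ (or to a set on each side of the edge simultaneously) is treated correctly, and that idempotence of the Z-spider really does prevent spurious self-loops or phases. An alternative route that sidesteps gadgets altogether is to replay the proof of Proposition~\ref{prop:pivot} essentially verbatim, replacing each appeal to the Z--X bialgebra rule by the Z--H bialgebra rule \HCompRule so that the arbitrary arities are absorbed directly; I would keep that in reserve if the gadget bookkeeping becomes unwieldy.
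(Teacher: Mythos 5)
Your strategy is semantically sound, but it is a genuinely different route from the paper's. The paper proves \eqref{eq:regular-hyper-pivot} by a direct, compact derivation from the ZH axioms --- essentially the alternative you keep ``in reserve'', in which the Z--H bialgebra (BA2) does the work of absorbing the arbitrary arities, together with (HS1), (M), (HS2) and Lemma~\ref{lem:sv}; this is consistent with the paper's later remark that \eqref{eq:regular-hyper-pivot} plus (ZS1), (ZS2), (HS2) conversely recovers (BA1), (BA2) and (HS1). Your primary route instead factors each $k$-ary H-box as an AND-gadget feeding a $2$-ary H-box, fires the ordinary pivot of Proposition~\ref{prop:pivot}, and re-absorbs the gadgets. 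What this buys is a clean conceptual reading (each hyperedge behaves, for pivoting purposes, like a single virtual neighbour carrying the conjunction of its outer legs) and reuse of a known rule; what it costs is a collection of auxiliary lemmas the direct proof never needs, plus a dependence on Proposition~\ref{prop:pivot} itself, which this paper does not prove independently --- within the paper it is really the arity-$2$ instance of the statement you are proving, so you must import its proof from the ZX-calculus literature \cite{DKPW19} to avoid circularity.

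The step to be most careful with is the absorption. After Proposition~\ref{prop:pivot} fires, the fresh spider $w_h$ is connected to $m$ new $2$-ary H-boxes but carries only \emph{one} AND-gadget, so the lemma ``grafting a gadget onto one leg of an H-box expands that leg'' does not yet apply as you state it: you must first duplicate the gadget $m$ times by pushing it through the Z-spider $w_h$. This is valid because AND maps basis states to basis states and is therefore copied by Z-spiders, but that copy law is itself a non-trivial derived fact of \cite{BK18} rather than an instance of your absorption lemma (and the degenerate case $m=0$ additionally needs the gadget to be deleted outright). With that lemma in hand, the rest goes through as you describe: double absorption plus Lemma~\ref{lem:sv} for spiders lying in both $S_h$ and $T_{h'}$ yields the doubly-!-boxed right-hand side, and disjointness of the two families of fresh spiders indeed rules out self-loop phases. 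So the proposal is completable, but the direct (BA2)-based derivation is substantially shorter and is what the paper actually does.
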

We see that Proposition~\ref{prop:hyper-pivot} is a special case of Theorem~\ref{thm:fhp} where all the $\alpha_j$ are equal to $\pi$ (which corresponds to $R$ being a Boolean polynomial in [HH]). To prove Theorem~\ref{thm:fhp} we combine the hyper-pivot rule with the Fourier transform of Proposition~\ref{prop:fourier-transform}, hence the name. The proofs of Theorem~\ref{thm:fhp} and Proposition~\ref{prop:hyper-pivot} are given in Appendix~\ref{sec:proofs}.




\begin{remark}
	If we assume \eqref{eq:regular-hyper-pivot} as a rule of the ZH-calculus together with (ZS1), (ZS2) and (HS2) of Figure~\ref{fig:ZH-rules}, then we can prove the rules (BA1), (BA2), and (HS1). Hence, hyperpivoting supersedes these separate rules, resulting in a more `compact' calculus.
\end{remark}

\begin{remark}
	Similar to how a pivot is implemented by a combination of three local complementations, our hyperpivot rule can be implemented by three of the hyperlocal complementations of \cite{gachechiladze2017graphical}.
\end{remark}

Using the hyperpivot rule we can straightforwardly prove identities that would be hard to show using the ZX-calculus. We give an example of this in Appendix~\ref{sec:example-hpivot}.









\subsection{Case hyper pivot}
\label{s:case}

Like the \textsf{[HH]} rule, the \textsf{[Case]} rule enables the elimination of a pair of variables $y_0, y_1$ from a path-sum expression. However, unlike the case rule, \textit{both} variables are allowed to occur in monomials that have coefficients other than $\frac 1 2$, as long as they are `orthogonal' in a certain sense. That is, the non-$\frac 1 2$ monomials containing $y_0$ must be multiplied by some boolean function $X(\vec x)$, whereas those containing $y_1$ must be multiplied by its negation. This can be expressed as the following ZH-calculus rule:

\begin{theorem}[Case hyper-pivot]The following rewrite rule holds in the ZH-calculus.
	$$\scalebox{1.0}{\tikzfig{caseHyperPivotExponential}}$$
\end{theorem}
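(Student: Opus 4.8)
The plan is to follow the same strategy that worked for the Fourier hyper-pivot (Theorem~\ref{thm:fhp}): first translate the \textsf{[Case]} rule of Figure~\ref{fig:rewrite} into a statement about hypergraph-like ZH-diagrams via $\pz{.}$, identifying what each monomial in the phase polynomial contributes diagrammatically, and then prove the resulting diagrammatic identity by combining the \emph{regular} hyper-pivot rule \eqref{eq:regular-hyper-pivot} with the Fourier transform of Proposition~\ref{prop:fourier-transform}. Concretely, I would decompose the LHS phase polynomial $\alpha y_0 X(\bm x) + \tfrac12 y_0 Q(\bm x) + \tfrac12 y_0 y_1 + \tfrac12 y_1 Q'(\bm x) + \beta y_1(1-X(\bm x)) + R(\bm x)$ into its ZH ingredients: the $\tfrac12 y_0 y_1$ term gives a Hadamard edge between the spiders $y_0$ and $y_1$; the $\tfrac12 y_0 Q$ and $\tfrac12 y_1 Q'$ terms give unlabelled H-boxes connecting $y_0$ (resp.\ $y_1$) to the spiders of the monomials of $Q$ (resp.\ $Q'$), exactly as in the \textsf{[HH]} analysis; and the $\alpha y_0 X$ and $\beta y_1(1-X)$ terms give $e^{i\alpha}$- and $e^{i\beta}$-labelled H-boxes attached to $y_0$ and $X$, resp.\ $y_1$ and the negation of $X$. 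The rest, $R(\bm x)$, is the untouched context, which I would carry along as a generic sub-diagram.

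Next I would peel the non-$\tfrac12$ phases off the two central spiders. The key observation is that the boolean function $X(\bm x)$ acts as a `case selector': using the X-spider/NOT definitions (XS) and (N) together with the phase-spider notation of \eqref{eq:phase-spider-dfn}, one can re-express the $e^{i\alpha y_0 X}$ and $e^{i\beta y_1 (1-X)}$ decorations by copying the value of $X$ — inserting an extra spider (or disconnect box) carrying $X$ and its complement $1-X$ — so that what remains attached directly to $y_0$ and $y_1$ is only $\tfrac12$-valued data. This reduces the configuration around $y_0, y_1$ to precisely the LHS of the regular hyper-pivot \eqref{eq:regular-hyper-pivot} (with $Q$ and $Q'$ now also absorbing the $X$-dependent monomials), so applying \eqref{eq:regular-hyper-pivot} deletes both $y_0$ and $y_1$ and produces a fully-connected layer of H-boxes between the neighbours of $y_0$ and those of $y_1$. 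Among the newly-created connections, those coming from the $\alpha$- and $\beta$-decorated ingredients are still `exponentiated' with irrational phases; I would then invoke the Fourier transform (Proposition~\ref{prop:fourier-transform}) to collapse these into the liftings $\lift{Q'}$ and $\lift{Q}$ appearing on the RHS, producing the terms $\alpha X \lift{Q'} + \beta(1-X)\lift{Q}$, while the Hadamard-valued cross-connections between the monomials of $Q$ and $Q'$ Fourier-transform to $\tfrac12 Q Q'$. Checking that the scalars match ($2\lambda$) and that no spurious self-loops remain (handled by the \ctikzfig{self-loop-removal}-style simplification as in the regular pivot) completes the argument.

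I expect the main obstacle to be the bookkeeping around the `orthogonality' condition — keeping track of which monomials belong to $X$ versus $1-X$ and verifying that after the hyper-pivot the cross-terms $\alpha X \lift{Q'}$ and $\beta (1-X) \lift{Q}$ come out with exactly the right coefficients and supports, rather than, say, extra $\alpha\beta X(1-X)$ cross-terms (which must vanish because $X(1-X)=0$ as Booleans, but one has to be careful that the diagrammatic manipulation actually realises this cancellation rather than leaving a stray H-box). A secondary subtlety is that, just as in Proposition~\ref{prop:hlocal-comp} and Theorem~\ref{thm:fhp}, the !-box notation cannot natively express the complete-graph connectivity generated by the pivot, so the statement and proof will need the same `hybrid' !-box/ellipsis notation, and I would need to be explicit (as the paper does in its remarks) about how the replication ranges over the monomials of $Q$, $Q'$, and the bitstrings $\mathbf{b}\in\B^n_*$ indexing the liftings. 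Modulo this combinatorial care, the proof is structurally a corollary of \eqref{eq:regular-hyper-pivot} plus Proposition~\ref{prop:fourier-transform}, and I would present it in Appendix~\ref{sec:proofs} alongside those results.
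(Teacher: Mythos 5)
Your high-level plan --- translate \textsf{[Case]} via $\pz{.}$, then simplify with hyper-pivoting plus the Fourier transform --- is pointed in the right direction, but the two steps you leave as ``bookkeeping'' are precisely where the proof lives, and the mechanism you propose for the first of them does not work as described. You suggest ``peeling'' the $e^{i\alpha}$- and $e^{i\beta}$-labelled H-boxes off the spiders $y_0$ and $y_1$ by copying the value of $X$, so that only $\tfrac12$-valued data remains and the \emph{regular} hyper-pivot \eqref{eq:regular-hyper-pivot} applies. But those H-boxes are genuinely attached to $y_0$ and $y_1$; no copying of $X$ detaches them, and unfusing $y_0$ via (ZS1) destroys the Hadamard-edge configuration that \eqref{eq:regular-hyper-pivot} requires. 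The paper instead keeps the $\lambda_k$-labelled boxes on $y_0$ and applies the \emph{Fourier} hyper-pivot (Theorem~\ref{thm:fhp}) directly to the middle pair --- that theorem already tolerates arbitrary phases on one side --- after first unfusing the $\mu$-labelled boxes with (HS1). The orthogonality $X(1-X)=0$ is then realised diagrammatically by a dedicated cancellation lemma (Lemma~\ref{lem:case-helper}), which uses the NOT together with a hyper-pivot to kill the spurious copies of the $\lambda_k$ boxes; your proposal names the danger of ``a stray H-box'' but supplies no diagrammatic device that removes it.

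The second gap is the phase accounting after the subsequent round of Fourier hyper-pivots (one for each pair in the $[m']$-annotated !-box). Because several of the newly created H-boxes land on the same set of spider groups, they fuse under (M), and one must compute the resulting exponent. The paper does this by counting configurations $f(p,q)=\sum_{r=1}^{p}\binom{p}{r}\binom{r}{q-p}$ and evaluating $\sum_{q,r}\binom{p}{r}\binom{r}{q-p}(-2)^{q-1}=-(-2)^{p-1}$ via the binomial theorem; nothing in your sketch produces these coefficients, and ``invoke the Fourier transform to collapse these into the liftings'' is not a substitute for that calculation. One further remark: the paper also gives a purely algebraic proof of (a generalisation of) \textsf{[Case]} at the path-sum level, by case-splitting on $X\in\{0,1\}$ and applying \textsf{[HH]} in each branch --- a short argument you might have reached for, and which explains \emph{why} the rule should hold even before the diagrammatic derivation; your proposal mentions neither this nor the paper's observation that the diagrammatic rule arises from a critical pair of the hyper-pivot with itself.
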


We present the details of its proof and its exact relation to the [Case] rule in Appendix~\ref{sec:case-rule}.

This rule seems less canonical than the others, and it is in fact omitted in a later presentation of the path-sum formalism by Amy~\cite{amythesis}. However, it is interesting to note that this rule essentially arises from two different, incompatible simplifications using hyper-pivoting starting from the same ZH-diagram (diagram~\eqref{eq:b} in Appendix~\ref{sec:case-rule}).
To adopt terminology from rewrite theory, the Case hyper-pivot rule arises from closing a \textit{critical pair} of the hyper-pivot rule with itself. This shows firstly that simplification using the other two laws is not confluent, which is unsurprising given its heuristic nature. More interestingly, it suggests that standard automated techniques for dealing with critical pairs, namely \textit{Knuth-Bendix completion}, could yield useful new simplification rules.

\section{Conclusion and Future Work}

We have found a bijective correspondence between path-sum expressions and ZH-diagrams and we gave ZH-calculus versions of each of the path-sum simplification rules. Furthermore, the derivation of the Case hyper-pivot rule suggests many more such rules could be recovered automatically by studying overlapping applications of the existing simplification rules.

The natural next step is to cash in these new structural insights to develop new techniques to simplify and verify circuits. ZH-diagrams and the Fourier hyper-pivot have been implemented using the PyZX library~\cite{pyzx} and seem to be effective at reducing many families of circuits to a compact form analogous to the GSLC form of ZX-diagrams~\cite{backens1}, however it is a topic of ongoing research to characterise exactly when this succeeds and when it succeeds efficiently.

Amy showed that the path-sum reduction rules suffice to verify the functional interpretation of many quantum circuits. By casting his rules in the ZH-calculus we extend this to arbitrarily constructed linear maps based on hypergraph states. In particular, we can use our results to verify and analyse MBQC schemes based on hypergraph states, such as those of Refs.~\cite{takeuchi2019quantum,gachechiladze2019changing}. Using graphical languages to study MBQC schemes has already resulted in several new results~\cite{duncan2010rewriting,backens2020extraction} and so this seems like a promising approach to new results in the burgeoning field of hypergraph MBQC.

Finally, it does not seem like the simplification rules here have yet captured the full power of the ZH-calculus. It would be interesting to see if other ZH-calculus rules, such as the ortho rule from Ref.~\cite{BK18}, can be translated into useful, and previously unconsidered simplification rules for path-sum expressions and/or ZH-diagrams.

\textbf{Acknowledgements}: We wish to thank Matthew Amy and Neil J. Ross for valuable discussions regarding path-sums, and Matt for help in understanding his software Feynman. AK and JvdW gratefully acknowledge support of AFOSR grant FA2386-18-1-4028.

\bibliographystyle{eptcs}
\bibliography{main}

\appendix



\section{Proofs of graphical rewrite rules}\label{sec:proofs}

We present here the committed proofs of graphical rewrite rules.
Note that the equation labels like (HS2) refer to the rules presented in Figure~\ref{fig:ZH-rules}.

\begin{proof}[Proof of Proposition~\ref{prop:hlocal-comp}]~
	\[\tikzfig{hlocal-comp-proof} \qedhere\]
\end{proof}

\begin{proof}[Proof of Proposition~\ref{prop:hyper-pivot}]~
	\ctikzfig{RHP-proof-compact-1}
	\ctikzfig{RHP-proof-compact-2}
	\ctikzfig{RHP-proof-compact-3}
	\[\tikzfig{RHP-proof-compact-4} \qedhere\]
\end{proof}










\begin{proof}[Proof of Theorem~\ref{thm:fhp}]
Instead of proving Theorem \ref{thm:fhp} exactly, we will prove the slightly more general following equation, that has regular instead of exponentiated H-boxes:
$$\tikzfig{figures/fourierHyperPivot}$$
Let us prove this equation:








\[\tikzfig{fhp-proof-compact-1}\]
\[\tikzfig{fhp-proof-compact-2}\]
\[\tikzfig{fhp-proof-compact-3}\]
\[\tikzfig{fhp-proof-compact-4}\]
\[\tikzfig{fhp-proof-compact-5}\]

After fusing the connected white spiders, we need one final step,
the correctness of which we prove in Lemma~\ref{lem:fhp-helper}:
$$\tikzfig{figures/fourierHyperPivotStepFinal}$$
\end{proof}

\begin{lemma}\label{lem:fhp-helper}
	Let $\chi(\lambda,\mathbf{b}) := \lambda^{(-2)^{\vert\mathbf{b}\vert -1}}$.
	The following equation holds in the ZH-calculus for all $n$ and $m$:
	$$\tikzfig{figures/FHPrec}$$
\end{lemma}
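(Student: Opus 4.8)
Looking at Lemma~\ref{lem:fhp-helper}, I need to prove an equation in the ZH-calculus involving a recursive-looking structure with the function $\chi(\lambda,\mathbf{b}) = \lambda^{(-2)^{|\mathbf{b}|-1}}$, where the sum runs over non-zero bitstrings. This appears right after the main proof of Theorem~\ref{thm:fhp}, so it's the final technical step needed there — collapsing a product of H-boxes connected through disconnect boxes into the explicit sum-over-bitstrings form with the $\chi$ coefficients.

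\textbf{Proof proposal.} The plan is to prove this by induction on $n$, the number of !-boxes (or the number of monomials/groups involved). For the base case $n=1$, the statement should reduce to a triviality or to the definition of the disconnect box in Eq.~\eqref{eq:disconnect-box}, since $\B^1_* = \{1\}$ and $\chi(\lambda,1) = \lambda^{(-2)^0} = \lambda$, so there is only one term and no recursion. For the inductive step, I would split off the last !-box. The key move is to case-split on whether the last variable $x_n$ is $0$ or $1$ — concretely, expand the relevant Z-spider over basis states, or equivalently use the disconnect-box machinery of \cite{KWK19}. When $x_n = 0$, the H-box connected through the disconnect box contributes nothing (disappears), leaving exactly the diagram for $n-1$; when $x_n = 1$, the disconnect box is "active" and we get the $n-1$ diagram but with every H-box label $\lambda$ replaced by $\lambda^{-2}$ (coming from the extra factor that a bitstring $\mathbf{b}'$ of weight $w$ extended by a $1$ becomes weight $w+1$, and $(-2)^{w} = -2 \cdot (-2)^{w-1}$, so $\chi(\lambda, \mathbf{b}'1) = \chi(\lambda^{-2}, \mathbf{b}')$ — this is the bookkeeping that makes the induction close), plus the single new H-box $\chi(\lambda, 0\cdots01) = \lambda$ connected to just the last group.

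The main steps in order would be: (1) set up the bijection between $\B^n_*$ and $\{\mathbf{b}' \in \B^{n-1}\} \times \{x_n \in \B\}$ minus the all-zeros element, matching it to the two cases of the expansion; (2) verify the coefficient identity $\chi(\lambda, \mathbf{b}'1) = \chi(\lambda^{-2}, \mathbf{b}')$ and $\chi(\lambda, \mathbf{b}'0) = \chi(\lambda, \mathbf{b}')$, which are just the arithmetic of $(-2)^{|\mathbf{b}|-1}$; (3) apply the induction hypothesis to each of the two resulting sub-diagrams; (4) recombine the two cases — here I'd use spider fusion (ZS1) and the disconnect-box notation to fold the $x_n=0$ and $x_n=1$ branches back into a single summed expression over $\B^n_*$. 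Steps (1)–(3) are essentially arithmetic and standard; step (4) is where the diagrammatic rewriting actually happens and where I'd need to be careful that the graph structure (which H-boxes connect to which spiders) matches up on both sides after recombination.

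\textbf{Main obstacle.} I expect the hardest part to be step (4): correctly handling the recombination of the two branches into the !-boxed/summed form, because the $x_n=1$ branch introduces H-boxes with squared-and-negated labels $\lambda^{-2}$ that must be seen as living on hyperedges that now also include vertex $n$, and simultaneously the "old" hyperedges from the $x_n=0$ branch must be seen as the same hyperedges but without vertex $n$. Making this precise likely requires careful use of the disconnect-box identity from \cite{KWK19} — essentially recognising that an H-box with label $\chi(\lambda,\mathbf{b})$ attached to a disconnect box on wire $n$ is exactly "$\chi(\lambda,\mathbf{b}0)$ when $x_n=0$, $\chi(\lambda,\mathbf{b}1)$ when $x_n=1$". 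A secondary subtlety is the overlapping-!-box bookkeeping noted in the paper (the hybrid !-box/ellipsis notation), so I'd want to phrase the induction on an explicit finite $n$ rather than on the !-box directly, exactly as the lemma statement is given "for all $n$ and $m$".
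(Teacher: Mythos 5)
Your arithmetic bookkeeping is right: $\chi(\lambda,\mathbf{b}0)=\chi(\lambda,\mathbf{b})$ and $\chi(\lambda,\mathbf{b}1)=\chi(\lambda^{-2},\mathbf{b})$, and splitting $\B^n_*$ along the last coordinate does decompose the right-hand side into a copy of the $(n-1)$-instance with parameter $\lambda$, a copy with parameter $\lambda^{-2}$ whose H-boxes are additionally attached to the new group, and one extra $\lambda$-labelled H-box on the new group alone. The overall shape --- peel off one index, do a two-case Boolean analysis on a disconnect-box bit, fold back into the annotated !-box --- is also essentially the shape of the paper's argument, which however inducts on $m$ rather than $n$: it expands the $(m{+}1)$-th copy of a !-box, applies the induction hypothesis, proves a small displayed identity for a single Boolean $b$ by checking $b=0$ and $b=1$ (Eq.~\eqref{eq:fhp-helper}), and then uses Lemma~\ref{lem:sv} to collapse the parallel disconnect-box wires this creates before refolding.

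The genuine gap is your step (4), which you flag as the main obstacle but do not resolve --- and it is not a side issue, it is the entire content of the lemma. Knowing how the right-hand side decomposes only tells you what you must prove; to close the induction you still have to derive, inside the calculus, that the left-hand side for $n$ rewrites into the composite of the two $(n-1)$-instances (one with $\lambda$, one with $\lambda^{-2}$ attached to the new group) together with the extra $\lambda$-labelled H-box. That splitting is exactly the diagrammatic form of $\lift{P\oplus Q}=\lift{P}+\lift{Q}-2\lift{PQ}$, i.e.\ the two-variable case of the Fourier transform, and none of the tools you invoke for the recombination (spider fusion, the definition of the disconnect box) produces it; the paper supplies precisely this missing step as the displayed two-case identity together with the parallel-edge collapse of Lemma~\ref{lem:sv}. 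Separately, be careful with ``expand the relevant Z-spider over basis states'': decomposing a spider as a sum of basis projectors is a semantic computation, not a ZH-derivation, so on its own it would only establish the equation up to an appeal to completeness. Case-splitting on the Boolean parameter of a disconnect box, by contrast, is a legitimate syntactic move and is what the paper actually does.
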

\begin{proof}
	We prove by induction on $m$. The base case $m=0$ is trivial, so suppose the equation holds for a fixed $m$. We will prove it for $m+1$. \\
	First, expand the !-box of $m+1$ one time:
	$$\tikzfig{figures/FHPrec1}$$
	Then by induction hypothesis:
	$$\tikzfig{figures/FHPrec2}$$
	Before stating the next step, we prove that for any Boolean $b$:
	\begin{equation}\label{eq:fhp-helper}
		\tikzfig{figures/FHPrec3.general}
	\end{equation}
	Indeed for both $b=0$ and $b=1$, the equation holds:
	$$\tikzfig{figures/FHPrec3.0}$$
	$$\tikzfig{figures/FHPrec3.1}$$
	Using Eq.~\eqref{eq:fhp-helper}:
	$$\tikzfig{figures/FHPrec3}$$
	Now note that each of the white spiders in the top !-box is connected by many wires to the H-box labelled with $\chi(\lambda_{m+1},\vec b)$. Hence, using Lemma~\ref{lem:sv} we can ignore the top $b_i$ disconnect box:
	$$\tikzfig{figures/FHPrec3prime}$$
	And finally we can put back the $m+1$ term of the !-box:
	\[\tikzfig{figures/FHPrec4} \qedhere\]
\end{proof}

\section{Usage example of the regular hyper pivot}\label{sec:example-hpivot}

Using the hyperpivot rule we can straightforwardly prove identities that would be hard to show using the ZX-calculus.
For instance, consider the following Toffoli circuit.
\[\tikzfig{figures/fullCircuit}\]
By reducing it to a hypergraph-like ZH-diagram, and repeatedly applying hyperpivoting to any pair of connected internal spiders, we can reduce it to the identity:
\[\tikzfig{figures/circuitExampleReduction}\]

$$\tikzfig{figures/circuitExampleDiagram} \stackrel{\text{\tiny{(HS2)}}}{=} \tikzfig{figures/circuitExampleDiagramStep2}$$ 
$$\stackrel{\text{\tiny{(RHP)}}}{=} \tikzfig{figures/circuitExampleDiagramStep3}\stackrel{\text{\tiny{(M)+(HS2)}}}{=} \tikzfig{figures/circuitExampleDiagramStep4}$$
$$\stackrel{\text{\tiny{(M)}}}{=}\tikzfig{figures/circuitExampleDiagramStep5}\stackrel{\text{\tiny{(RHP)}}}{=} \tikzfig{figures/circuitExampleDiagramStep6} = \tikzfig{figures/circuitExampleDiagramStep6.1}$$
$$\stackrel{\text{\tiny{(M)}}}{=} \tikzfig{figures/circuitExampleDiagramStep7} \stackrel{\text{\tiny{(HS2)}}}{=} \tikzfig{figures/circuitExampleDiagramStep8}$$

The several uses of (HS2) are highlighted by braces as follows: $\tikzfig{figures/hhAcc}$. 
In this case, the multiplication law (M) is only used with
$-1\times(-1)=1$, then two H-boxes connected to the same white dots eliminates themselves.
They are signaled with braces:
$$\tikzfig{figures/multiplyAcc}$$
Finally, the other operations are (RHP) applications, and highlighted this way: $\tikzfig{figures/rhpAcc}$.

\section{The [Case] rule in the ZH-calculus}\label{sec:case-rule}



The \textsf{[Case]} rule as stated in Figure~\ref{fig:rewrite} can be generalized to the following equation:
$$\sum_{y_0,y_1,\bm x} \varphi^X (-1)^{y_0Q} (-1)^{y_0y_1} (-1)^{y_1Q'} \psi^{1-X} \ket{\bm x_{\vec o}}\bra{\bm x_{\vec i}} \reduces 2 \sum_{\bm x} (-1)^{QQ'} \varphi[y_0 \leftarrow \overline{Q'}]^X \psi[y_1 \leftarrow \overline{Q}]^{1-X} \ket{\bm x_{\vec o}}\bra{\bm x_{\vec i}}$$
where $\varphi$ and $\psi$ are complex functions over$y_0$ and $\bm x$, respectively $y_1$ and $\bm x$, and
$X$, $Q$ and $Q'$ are boolean polynomials over $\bm x$.

This rule indeed generalises [Case] in Figure~\ref{fig:rewrite}, which is easily seen
by replacing $\varphi$ by $e^{2\pi i (\alpha y_0 X + R)}$ and
$\psi$ by $e^{2\pi i (\beta y_1 (1-X) + R)}$.

Let us first show the correctness of this rule algebraically.
For a fixed $\x$, we have:

$$\sum_{y_0,y_1} \varphi^X (-1)^{y_0Q} (-1)^{y_0y_1} (-1)^{y_1Q'} \psi^{1-X} =
\begin{cases}
        \sum\limits_{y_0,y_1} (-1)^{y_0Q} (-1)^{y_0y_1} (-1)^{y_1Q'} \psi \ \text{if}\ X=0 \\
        \sum\limits_{y_0,y_1} \varphi (-1)^{y_0Q} (-1)^{y_0y_1} (-1)^{y_1Q'} \ \text{if}\ X=1
\end{cases}
$$

Now applying the \textsf{[HH]} rule to both cases, this reduces to:

$$
\begin{cases}
        2 (-1)^{QQ'}\psi[y_1 \leftarrow \overline{Q}] \ \text{if}\ X=0\\
        2 \varphi[y_0 \leftarrow \overline{Q'}](-1)^{QQ'}\ \text{if}\ X=1
\end{cases}
$$

which is indeed equal to
$2 (-1)^{QQ'} \varphi[y_0 \leftarrow \overline{Q'}]^X \psi[y_1 \leftarrow \overline{Q}]^{1-X}$ as required.

Using similar reasoning as in Sections~\ref{sec:hlc} and~\ref{s:hyper-pivot} we can show that this path-sum rule is equal to the following diagrammatic rule:
$$\tikzfig{figures/caseHyperPivot}$$

Before we prove this rule in the ZH-calculus, let us state the following lemma that we will need.
\begin{lemma}\label{lem:case-helper}
	The ZH-calculus proves the following, for any set of complex numbers $\lambda_i\in S$:
	$$\tikzfig{figures/caseLemma}$$
\end{lemma}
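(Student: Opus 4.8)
The plan is to prove Lemma~\ref{lem:case-helper} by induction on the number of $\lambda_i$'s appearing, mirroring the structure of the proof of Lemma~\ref{lem:fhp-helper}. The base case (no $\lambda$-labelled H-boxes, or equivalently a single trivial one) should reduce to a direct application of the $\mathsf{[HH]}$-type manipulation or a spider-fusion identity, and so is essentially immediate. For the inductive step I would expand one copy of the relevant !-box, isolate the contribution of the newly exposed $\lambda_{m+1}$-labelled H-box, and then apply the induction hypothesis to the remaining sub-diagram.

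The heart of the argument will be a case split on the Boolean value carried by the wire that the new H-box attaches to, exactly as in Eq.~\eqref{eq:fhp-helper}: for $b=0$ the $\lambda_{m+1}$ H-box disconnects (its parameter becomes irrelevant because one of its legs is plugged with $\ket{0}$), and for $b=1$ it contributes a genuine factor. In both cases one checks the diagram equality directly by expanding H-boxes and Z-spiders into sums over basis states, just as was done for the two sub-cases $b=0$ and $b=1$ in the proof of Lemma~\ref{lem:fhp-helper}. Once the per-branch identities are in hand, I would reassemble them using the annotated !-box / disconnect-box notation of Eq.~\eqref{eq:disconnect-box}, invoke Lemma~\ref{lem:sv} to absorb any parallel edges created between a spider and the aggregated H-box (the same clean-up step used near the end of the proof of Lemma~\ref{lem:fhp-helper}), and finally fold the $(m+1)$-th !-box term back in to obtain the statement for $m+1$.

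I expect the main obstacle to be purely bookkeeping: tracking how the parameters $\lambda_i$ on the H-boxes transform under the case split and the subsequent spider fusion, and making sure the !-box indexing on both sides of the claimed equation stays consistent when a copy is peeled off and reattached. The use of multiplicativity of H-box labels via $\mathrm{(M)}$ in the $b=1$ branch (combining $\lambda_{m+1}$ with whatever label the fused H-box already carries) is the step most likely to need care, since it is where the precise form of the right-hand side is pinned down. None of this requires any rule beyond those in Figure~\ref{fig:ZH-rules} together with Lemma~\ref{lem:sv}, so completeness of the ZH-calculus is not even needed; the proof is a direct diagrammatic induction.
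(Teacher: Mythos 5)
There is a genuine gap here. The paper's proof is short and structural: it expands the NOT gate using its definition \eqref{eq:X-dfn}, applies the hyper-pivot rule to the white spider this exposes and its neighbour on the left, and then observes that all the H-boxes carrying multiples of $\lambda_i$ produced by the pivot cancel against one another. Your proposal never mentions the NOT gate at all, even though it is the essential ingredient of the lemma --- the statement is precisely about how a negation interacts with the $\lambda_i$-labelled H-boxes, and the cancellation of the $\lambda_i$-multiples is the entire content (it is invoked in the [Case] derivation exactly to kill most of the copies of the $\lambda_k$ H-boxes that the Fourier hyper-pivot produces). Nowhere in your plan do you identify the mechanism that produces this cancellation, so the step ``combining $\lambda_{m+1}$ with whatever label the fused H-box already carries'' --- which you correctly flag as the place where the right-hand side is pinned down --- is left entirely open.

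The plan is also pattern-matched on Lemma~\ref{lem:fhp-helper} in a way that does not transfer. That induction works because its statement is literally parametrized by bitstrings $\mathbf{b}$ through disconnect boxes, so the per-value verification of Eq.~\eqref{eq:fhp-helper} for $b=0$ and $b=1$ is a meaningful diagrammatic step; Lemma~\ref{lem:case-helper} carries no such parameter, and ``the Boolean value carried by the wire'' is not something you can case-split on inside the calculus without first introducing that structure (which is, in effect, what expanding the NOT and pivoting accomplishes in the paper's argument). Finally, ``checking the diagram equality by expanding H-boxes and Z-spiders into sums over basis states'' is a semantic verification, not a derivation; to conclude that \emph{the ZH-calculus proves} the equation from such a check you would need completeness, which you explicitly claim not to need. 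A workable route is the paper's: reduce the lemma to one application of the (already established) hyper-pivot machinery and then do the bookkeeping of the resulting $\lambda_i$-powers.
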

\begin{proof}
	Expand the NOT using its definition, and apply the hyperpivot rule to the resulting white spider and its neighbour on the left.
	It is then straightforward to check that all the resulting H-boxes with multiples of $\lambda_i$ cancel out, resulting in the correct diagram.
\end{proof}

Now let us prove the diagrammatic [Case] rule. Starting with:

\begin{equation}\label{eq:a}
\tikzfig{figures/caseHyperPivotStep1}
\end{equation}

Unfuse all the $\mu$-labelled H-boxes using (HS1):

\begin{equation}\label{eq:b}
\tikzfig{figures/caseHyperPivotStep2}
\end{equation}

Apply a Fourier hyper-pivot to the middle two spiders. This results in many copies of the $\lambda_k$ H-boxes, but most of these are canceled by applying Lemma~\ref{lem:case-helper}, and we get the following diagram:

$$\tikzfig{figures/caseHyperPivotStep3}$$

Now for every instance of the pair of spiders connected by the arity-2 H-box in the $[m']$-annotated !-box we will apply a Fourier hyper-pivot. 
This results in diagram~\eqref{eq:c} below.
In order to see why we get this diagram, let us consider one of these hyper-pivots, and calculate the phases that appear on each of the H-boxes.
Expanding the $[n]$-annotated !-box we get
$n$ groups of white spiders,
$n$ `2-legged' H-boxes and $n$ `3-legged' H-boxes.
Because each pair of a `2-legged' and `3-legged' H-box coming from the same expansion of the !-box are connected to the same group of white spiders, the Fourier transform of the hyper-pivot will introduce H-boxes that have \emph{multiple} wires to the same group of white spiders. These are then collapsed to a single wire using Lemma~\ref{lem:sv}. As a result, multiple H-boxes will be connected to the same set of spiders and hence will fuse using the rule (M).

To understand the resulting phases on the H-boxes, we separate cases into H-boxes that arise just from the `2-legged' H-boxes, and ones that arise from a combination of both. The first case results in the normal pattern for a Fourier hyper-pivot and gives the lower $\mu$-labeled H-boxes in diagram~\eqref{eq:c}.

For the other case, let us denote by $f(p,q)$ the number of H-boxes after the Fourier hyper-pivot that have at least one of its connections arising from a 3-legged H-box, carry a phase of $\mu^{(-2)^{q-1}}$, and are connected to some chosen set of $p$ groups of white spiders (and the group of spiders to the right of the $\mu$ H-box, which all H-boxes will be connected to). Then when (M) is applied to fuse all the H-boxes connected to the same set of $p$ groups of spiders, the resulting H-boxes will have phases of
\begin{equation}\label{eq:mu-power}
\prod_{q=p}^{2n} \mu^{f(p,q) (-2)^{q-1}} = \mu^{\sum_{q=p}^{2n} f(p,q) (-2)^{q-1}}.
\end{equation}
Hence, to determine this phase, we need to calculate $f(p,q)$. 
So fix some set of size $p$ of groups of spiders, and suppose there is an H-box with a phase of $(-2)^{q-1}$ connected to these groups. By assumption, some of the connections `originated' from the 3-legged H-boxes. Let $1\leq r\leq p$ denote this number. Note that there are $\binom{p}{r}$ ways to choose the 3-legged H-boxes. The remaining original connections of the H-box must have come from the 2-legged H-boxes. Some of these will connect to H-boxes that aren't connected to those of the first $r$, but others will be `doubled' up. As the phase on the H-box is $(-2)^{q-1}$ there must have been a total of $q$ wires, and hence $p-q$ wires need to be doubled up. There are $r$ possible choices for doubling up, and hence there are $\binom{r}{p-q}$ possible ways we can double up the correct number of wires.
Hence:
$$f(p,q) = \sum_{r=1}^p \binom{p}{r}\binom{r}{q-p}$$
Combining this with Eq.~\eqref{eq:mu-power} we see that the resulting H-box connected to any set of $p$ groups of spiders carries a phase of 
$\mu$ raised to the power of 
$$\sum_{q=p}^{2n}\sum_{r=1}^{p} \binom{p}{r}\binom{r}{q-p} (-2)^{q-1}.$$
Let us simplify this expression:
\begingroup 
\allowdisplaybreaks
\begin{align*}
        \sum_{q=p}^{2n}\sum_{r=1}^{p} \binom{p}{r}\binom{r}{q-p} (-2)^{q-1} 
        &\stackrel{*}{=} \sum_{r=1}^{p} \binom{p}{r} \sum_{q=p}^{p+r} \binom{r}{q-p} (-2)^{q-1} \\
        &= \sum_{r=1}^{p} \binom{p}{r} \sum_{q=0}^r \binom{r}{q} (-2)^{q+p-1} \\
        &= (-2)^{p-1} \sum_{r=1}^{p} \binom{p}{r} \sum_{q=0}^r \binom{r}{q} (-2)^q \\
        &\stackrel{**}{=} (-2)^{p-1} \sum_{r=1}^{p} \binom{p}{r} (-1)^{r} \\
        &= (-2)^{p-1} \left( \sum_{r=0}^{p} \binom{p}{r} (-1)^{r} -1 \right) \\
        &\stackrel{**}{=} (-2)^{p-1} (0-1) \\
        &= -(-2)^{p-1}
\end{align*}
\endgroup
Here in the equality marked $*$ we are warranted in changing the limit of summation because $\binom{r}{q-p} = 0$ whenever $q > p+r$, and the equalities marked $**$ are applications of the binomial identity $\sum_{k=0}^n\binom{n}{k}x^ky^{n-k} = (x+y)^n$ with $y=1$.

The preceding discussions and calculations show that we indeed get the labeled H-boxes as stated in the following diagram:
\begin{equation}\label{eq:c}
\tikzfig{figures/caseHyperPivotStep4}
\end{equation}

We claim that this diagram is equal to the following, which finishes our proof:
$$\tikzfig{figures/caseHyperPivotStep5}$$
To see that this is true, decompose the NOT in this diagram using its definition, and apply a Fourier hyper-pivot on the resulting white spider and the one beneath it.

\end{document}